\newtheorem{theorem}{Theorem}
\newtheorem{definition}{Definition}
\newtheorem*{theorem1restate}{Theorem \ref{th:LegExAccu}}
\newtheorem*{theoremmainrestate}{Theorem \ref{th:main}}
\newcommand{\proc}[1]{\textup{\textsf{#1}}}
\begin{document}

\title{Dividing quantum circuits for time evolution of stochastic processes by orthogonal series density estimation}
\author{Koichi Miyamoto}
\email{miyamoto.kouichi.qiqb@osaka-u.ac.jp}
\affiliation{Center for Quantum Information and Quantum Biology, The University of Osaka, Toyonaka, Osaka, Japan}

\date{\today}

\begin{abstract}

Quantum Monte Carlo integration (QMCI) is a quantum algorithm to estimate expectations of random variables, with applications in various industrial fields such as financial derivative pricing.
When QMCI is applied to expectations concerning a stochastic process $X(t)$, e.g., an underlying asset price in derivative pricing, the quantum circuit $U_{X(t)}$ to generate the quantum state encoding the probability density of $X(t)$ can have a large depth.
With time discretized into $N$ points, using state preparation oracles for the transition probabilities of $X(t)$, the state preparation for $X(t)$ results in a depth of $O(N)$, which may be problematic for large $N$.
Moreover, if we estimate expectations concerning $X(t)$ at $N$ time points, the total query complexity scales on $N$ as $O(N^2)$, which is worse than the $O(N)$ complexity in the classical Monte Carlo method.
In this paper, to improve this, we propose a method to divide $U_{X(t)}$ based on orthogonal series density estimation.
This approach involves approximating the densities of $X(t)$ at $N$ time points with orthogonal series, where the coefficients are estimated as expectations of the orthogonal functions by QMCI.
By using these approximated densities, we can estimate expectations concerning $X(t)$ by QMCI without requiring deep circuits.
Our error and complexity analysis shows that to obtain the approximated densities at $N$ time points, our method achieves the circuit depth and total query complexity scaling as $O(\sqrt{N})$ and $O(N^{3/2})$, respectively.

\end{abstract}

%\keywords{}

\maketitle

\section{Introduction \label{sec:intro}}

Quantum Monte Carlo integration (QMCI) \cite{montanaro2015} built upon quantum amplitude estimation (QAE) \cite{brassard2002} is a prominent quantum algorithm to estimate expectations of random variables.
For an estimation with accuracy $\epsilon$, it runs with $O(1/\epsilon)$ query complexity, showing the quadratic speed-up compared to the classical counterpart with complexity $O(1/\epsilon^2)$.
One reason for its prominence is that it has the potential to provide quantum advantages in practical problems in science and industry.
For example, applications of QMCI to finance, especially financial derivative pricing, are widely investigated \cite{Rebentrost2018,Stamatopoulos2020optionpricingusing,Chakrabarti2021thresholdquantum,miyamoto2022bermudan,kaneko2022quantum,doriguello2022,wang2023option}.
In addition, its applications are considered in various fields including statistical physics~\cite{montanaro2015,Cornelissen2023}, lattice gauge theory~\cite{Gustafson2023}, nuclear physics~\cite{Du2024}, and machine learning~\cite{pmlr-v139-wang21w,wiedemann2023quantum,Wan_Zhang_Li_Zhang_Sun_2023,NEURIPS2023_401aa72e,hikima2024quantum}.

In many use cases of Monte Carlo integration, whether classical or quantum, we run the time evolution of a stochastic process $X(t)$.
For example, in derivative pricing, we evolve underlying asset prices because the derivative price is given as an expectation concerning the asset prices at a future time.
This paper focuses on such time evolution on a quantum computer.

%To explain the background, let us first see what time evolution in QMCI is.
%Denoting the stochastic process under consideration by $X(t)$, we suppose that we are given the discretized time points, the initial time $t_0$ and future times $t_1,\ldots,t_N$, and the transition probability density $p_i$ between $X(t_i)$ and $X(t_{i+1})$ for $i=0,\ldots,N-1$.
Let us suppose that we evolve $X(t)$ from $t=0$ to $t=T$ to calculate an expectation concerning $X(T)$.
We set the discretized time points $t_0=0,t_1,\ldots,t_N=T$ with small interval and evolve $X(t)$ with the transition probability densities $p_i$ between $X(t_i)$ and $X(t_{i+1})$.
To run QMCI, we construct the quantum circuits $U_{p_i}$ that encode $p_i$ into quantum states and combine them in sequence, yielding an oracle $U_{X(t_N)}$ to generate a quantum state $\ket{f_{X(t_N)}}$ that encodes the probability density $f_{X(t_N)}$ of $X(t_N)$ in the amplitudes.
$U_{X(t_N)}$ obviously has $O(N)$ circuit depth, and, when it is used in QMCI iteratively, the total depth of the quantum circuit becomes $O(N/\epsilon)$.
This depth growth in $N$, along with the dependency on $\epsilon$, may make it challenging to run long time evolutions in QMCI with high accuracy, especially in the early stages of fault-tolerant quantum computers (FTQCs).

There is another issue when we estimate multiple expectations related to $X(t_1),\ldots,X(t_N)$, the values of $X(t)$ at different times.
If we naively perform individual runs of QMCI, each of which uses $U_{X(t_i)}$ $O(1/\epsilon)$ times, the total number of queries to $\{U_{p_i}\}_i$ becomes $O(N^2/\epsilon)$.
This quadratic scaling on $N$ is worse than the classical Monte Carlo method: in the classical way, we store many realizations of $X(t_i)$ in memory and use them to calculate the expectation concerning $X(t_i)$ and also as initial values for time evolution after $t_i$, which leads to the total complexity scaling on $N$ as $O(N)$.
The worse scaling on $N$ in the quantum setting has already been pointed out in \cite{doriguello2022}, which considers expectation estimation at multiple times in optimal stopping problems.

Motivated by these issues, in this paper, we propose a way to divide the time evolution oracle $U_{X(t_N)}$ based on orthogonal series density estimation (OSDE) \cite{Efromovich2010}.
OSDE is a numerical technique to approximate unknown probability density functions with orthogonal function series\footnote{For another OSDE-based quantum algorithm for financial risk aggregation, see \cite{mori2024quantum}.}.
The idea is that at each time, we {\it save} the density $f_i$ of $X(t_i)$ as an orthogonal series. 
The coefficient of a basis function $P_l$ in the orthogonal series approximation of $f_i$ is given via the expectation of $P_l(X(t_i))$.
Therefore, we conceive the following procedure.
Using QMCI with $U_{p_0}$, we estimate the expectations of the basis functions and get an orthogonal series approximation $\hat{f}_1$ of $f_1$.
Then, by combining the quantum circuit $U^{\rm SP}_{\hat{f}_1}$ to generate a state encoding $\hat{f}_1$ with $U_{p_1}$, we get a quantum circuit $\tilde{U}_{X(t_2)}$, which approximately acts as $U_{X(t_2)}$, and by QMCI with this, we get an approximation $\hat{f}_2$ of $f_2$.
By repeating this, we get the approximated density functions $\hat{f}_1,\ldots,\hat{f}_N$ and the quantum circuits $U^{\rm SP}_{\hat{f}_1},\ldots,U^{\rm SP}_{\hat{f}_N}$, whose depth is not $O(N)$ but constant.
Using these, we can estimate expectations concerning $X(t_i)$.

Although this seems to work well, we are afraid that estimating the coefficients in $\hat{f}_i$ by QMCI may require deep quantum circuits and a large number of queries.
Note that the estimation of $\hat{f}_i$ depends on the previous estimation of $\hat{f}_{i-1}$ and the estimation error can accumulate.
Through the $N$-step estimation of $f_N$, the errors of $\epsilon^\prime$ in $N$ steps may accumulate to $O(N\epsilon^\prime)$.
If so, to suppress the overall error to a specified level $\epsilon$, we need to take $\epsilon^\prime\lesssim \epsilon/N$, which results in the circuit depth of order $O(N/\epsilon)$ and the total query number of order $O(N^2/\epsilon)$, the same as the naive way.

We evade this by utilizing unbiased QAE devised in \cite{Cornelissen2023}.
By this modified version of QAE, the bias of the error in estimation is suppressed while the variance of the error is kept similar to the original QAE.
By using QMCI based on unbiased QAE, the bias in $\hat{f}_i$ is suppressed and the variance accumulates linearly with respect to $N$, which means the typical error level of $\hat{f}_i$ scales on $N$ as $O(\sqrt{N})$.
Through a more detailed analysis of the error and complexity, we reveal that in our method, the circuit depth is of order $\widetilde{O}(\sqrt{N}/\epsilon)$ and the total query number is of order $\widetilde{O}(N^{3/2}/\epsilon)$, which show the improvement by a factor $\sqrt{N}$ compared to the naive way.

The rest of this paper is organized as follows.
Sec. \ref{sec:pre} is a preliminary one.
There, we explain Legendre polynomials, which we use as basis functions in OSDE in this paper, and the accuracy of function approximation by them, followed by a brief review of OSDE and quantum algorithms, (unbiased) QAE and QMCI.
In Sec. \ref{sec:QCTimeEv}, we explain quantum circuits for the time evolution of stochastic processes and issues in their implementation, which are briefly explained above, in more detail.
Then, in Sec. \ref{sec:main}, we elaborate our method to divide time evolution quantum circuits based on OSDE, presenting the concrete procedure as Algorithm \ref{alg:main} and the result of the rigorous analysis of the error and complexity as Theorem \ref{th:main}, whose proof is given in Appendix \ref{app:proofMain}.
We make a comparison between the proposed method and existing ones in Sec. \ref{sec:compare}.
In Sec.~\ref{sec:demo}, we conduct a numerical demonstration of our method, taking a kind of stochastic process used in some fields as an example, and compare it to other methods, seeing that our method becomes advantageous in some cases.
We summarize this paper in Sec. \ref{sec:sum}.

\section{Preliminary \label{sec:pre}}

\subsection{Notation}

$\mathbb{N}_0 := \{0\}\cup\mathbb{N}$ denotes the set of all non-negative integers.
For $n\in\mathbb{N}$, We define $[n]:=\{1,\cdots,n\}$ and $[n]_0:=\{0,1,\cdots,n\}$ for any $n\in \mathbb{N}_0$.
We also define %$\mathbb{N}_{\ge n}:=\{i\in \mathbb{N} \ | \ i\ge n\}$ for $n\in \mathbb{N}$. Similarly, we define
$\mathbb{R}_{>a}:=\{x\in\mathbb{R} \ | \ x>a\}$ and $\mathbb{R}_{\ge a}:=\{x\in\mathbb{R} \ | \ x\ge a\}$ for $a\in\mathbb{R}$.
$\mathbb{R}_+$ denotes the set of all positive real numbers, that is, $\mathbb{R}_{>0}$.

For $a,b\in \mathbb{N}_0$, $\delta_{a,b}$ denotes the Kronecker delta, which is 1 if $a=b$ or 0 otherwise.
For $d\in\mathbb{N}$ and $\vec{l}_1,\vec{l}_2\in \mathbb{N}_0^d$, we also define $\delta_{\vec{l}_1,\vec{l}_2}$, which is 1 if $\vec{l}_1=\vec{l}_2$ and 0 otherwise.

For $\vec{x}\in\mathbb{R}^d$, $\|\vec{x}\|$ denotes its Euclidean norm and $\|\vec{x}\|_\infty$ denotes its maximum norm.

$\vec{0}$ denotes the vector with all the entries equal to 0.

\subsection{Approximation of a function by Legendre expansion}

For $l\in\mathbb{N}_0$, the $l$-th Legendre polynomial $P_l$ is defined by
\begin{equation}
    P_l(x) := \frac{1}{2^l l!} \frac{d^l}{dx^l}\left(x^2-1\right)^l,
    \label{eq:LegPolyDef}
\end{equation}
where $x\in[-1,1]$.
The Legendre polynomials satisfy the following orthogonal relationship
\begin{equation}
    \int_{-1}^1 P_l(x)P_{l^\prime}(x) dx = \frac{2}{2l+1}\delta_{l,l^\prime}
\end{equation}
for any $l,l^\prime\in\mathbb{N}_0$.
In the multivariate case, where the number of the variables is $d\in\mathbb{N}$, we define the tensorized Legendre polynomials by
\begin{equation}
    P_{\vec{l}}(\vec{x}) := \prod_{i=1}^d P_{l_i}(x_i),
    \label{eq:LegPolyTenDef}
\end{equation}
where $\vec{l}=(l_1,\cdots,l_d)\in\mathbb{N}_0^d$ and $\vec{x}=(x_1,\cdots,x_d)\in \Omega_d := [-1,1]^d$.
The orthogonal relationship is now
\begin{equation}
    \int_{\Omega_d} P_{\vec{l}}(\vec{x})P_{\vec{l}^\prime}(\vec{x}) d\vec{x} = \frac{1}{C(\vec{l})}\delta_{\vec{l},\vec{l}^\prime},
    \label{eq:Orthddim}
\end{equation}
where $C(\vec{l}):=\prod_{i=1}^d \left(l_i+\frac{1}{2}\right)$.

Let us suppose that we are given a $d$-variate analytic function $f:\Omega_d\rightarrow\mathbb{R}$ and want to approximate it.
We now consider its Legendre expansion, that is, the series of the tensorized Legendre polynomials approximating $f$:
\begin{equation}
    f \approx \mathcal{P}_L[f] := \sum_{\vec{l}\in\Lambda_L} a_{f,\vec{l}} P_{\vec{l}},
\end{equation}
where we take the index set $\Lambda_L := [L]_0^d$ with $L\in\mathbb{N}$, and, for each $\vec{l}\in\Lambda_L$, the coefficient $a_{\vec{l}}$ is defined by
\begin{equation}
    a_{f,\vec{l}} := C(\vec{l}) \int_{\Omega_d} P_{\vec{l}}(\vec{x})f(\vec{x}) d\vec{x}.
    \label{eq:coef}
\end{equation}
The accuracy of the Legendre expansion was considered in \cite{wang2020analysis}, which in fact studied the expansion by more general orthogonal polynomials.
We now present the following theorem on the accuracy of the Legendre expansion, which is equivalent to Theorem 4.1 in \cite{wang2020analysis} restricted to the case of Legendre expansion.
The statement is informal, and the exact one is given in Appendix \ref{app:LegExAccu}.

\begin{theorem}[simplified]
For $f:\Omega_d\rightarrow\mathbb{R}$ with some properties, there exists $K\in\mathbb{R}_+$ and $\rho\in\mathbb{R}_{>1}$ such that, for any integer $L$ satisfying $L>\frac{d}{2\log \rho}$,
\begin{equation}
    \max_{\vec{x}\in\Omega_d} \left|\mathcal{P}_L[f](\vec{x})-f(\vec{x})\right| \le K\rho^{-L}
    \label{eq:LegExAccu}
\end{equation}
holds.
\label{th:LegExAccu}
\end{theorem}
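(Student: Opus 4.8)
The plan is to reduce the uniform error to a summable tail of Legendre coefficients and then to bound that tail using the analyticity of $f$ (the content hidden in the phrase ``with some properties''). The one elementary fact needed is that $|P_l(x)|\le 1$ for $x\in[-1,1]$, so by \eqref{eq:LegPolyTenDef} the tensorized polynomials satisfy $|P_{\vec l}(\vec x)|\le 1$ on all of $\Omega_d$. Granting that the full Legendre series converges to $f$, the truncation error is $\mathcal{P}_L[f]-f=-\sum_{\vec l\notin\Lambda_L}a_{f,\vec l}P_{\vec l}$, and the triangle inequality gives
\begin{equation}
\max_{\vec x\in\Omega_d}\bigl|\mathcal{P}_L[f](\vec x)-f(\vec x)\bigr|\le\sum_{\vec l\in\mathbb{N}_0^d\setminus\Lambda_L}\bigl|a_{f,\vec l}\bigr|.
\end{equation}
Thus it suffices to estimate each $|a_{f,\vec l}|$ and to sum over the complement of $\Lambda_L=[L]_0^d$, that is, over all $\vec l$ with $\|\vec l\|_\infty>L$.

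For the coefficient estimate I would invoke the classical Bernstein-ellipse argument, which is the genuine analytic heart of the statement and the step I expect to be the real obstacle. In one variable an analytic $f$ extends holomorphically to the ellipse $E_\rho$ with foci $\pm1$ and semi-axis sum $\rho>1$; rewriting \eqref{eq:coef} as a contour integral over $\partial E_\rho$ and invoking the $\rho^{l}$ growth of the Legendre functions of the second kind on that curve yields a bound $|a_{f,l}|\le c\,(1+l)^{\beta}\rho^{-l}$, with $c,\beta$ depending only on $\rho$ and $\max_{E_\rho}|f|$. Carrying out this Legendre-function asymptotics cleanly is delicate, so in the write-up I would instead cite Theorem 4.1 of \cite{wang2020analysis}, which is stated for general orthogonal-polynomial expansions and specializes to exactly this estimate. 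Since the hypotheses make $f$ analytic in a polyellipse, applying the bound in each coordinate and using the product forms \eqref{eq:LegPolyTenDef} and \eqref{eq:coef} tensorizes it to
\begin{equation}
\bigl|a_{f,\vec l}\bigr|\le K_0\prod_{i=1}^d (1+l_i)^{\beta}\rho^{-l_i}
\end{equation}
for a new constant $K_0$.

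It then remains to sum this bound over $\{\|\vec l\|_\infty>L\}$, which I would handle through the inclusion $\{\|\vec l\|_\infty>L\}\subseteq\bigcup_{j=1}^d\{l_j>L\}$ and a union bound, reducing everything to the one-dimensional series $\sum_{l\ge0}(1+l)^{\beta}\rho^{-l}$ and its tail $\sum_{l>L}(1+l)^{\beta}\rho^{-l}=O\bigl((1+L)^{\beta}\rho^{-L}\bigr)$. Collecting the $d$ coordinate contributions produces an overall bound of the shape $K'(1+L)^{d\beta}\rho^{-L}$, carrying a polynomial prefactor whose degree grows with $d$. The final move is to see that beyond the stated threshold the geometric factor dominates this prefactor: indeed $L>\tfrac{d}{2\log\rho}$ is exactly the location of the maximum of $L\mapsto(1+L)^{d/2}\rho^{-L}$, so with the effective per-coordinate degree $\beta=\tfrac12$ furnished by the tensorized Legendre estimate, the condition places $L$ past the peak, where $(1+L)^{d\beta}\rho^{-L}$ is decreasing and the decay $\rho^{-L}$ governs. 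The mild polynomial factor is then either retained in the exact appendix version or absorbed by shrinking the base infinitesimally, yielding the simplified bound $K\rho^{-L}$ of \eqref{eq:LegExAccu}. Tracking how $K$ and the threshold depend on $d$, $\rho$, and $\max_{E_\rho}|f|$ through this absorption is the only delicate bookkeeping; everything else is routine.
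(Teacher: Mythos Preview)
Your approach is sound, but you should know that the paper does not actually prove this statement: it is presented as a direct quotation of Theorem~4.1 in \cite{wang2020analysis}, specialized to the Legendre case, and Appendix~\ref{app:LegExAccu} only supplies the precise hypotheses (analytic extension to a polyellipse-type region $D_h$) without any argument. So your decision to ``cite Theorem~4.1 of \cite{wang2020analysis}'' is exactly what the paper does, and on that level your proposal and the paper coincide.

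Where you go further than the paper is in sketching the mechanism behind the cited result: bounding the sup-norm truncation error by the $\ell^1$ tail of the coefficients via $|P_{\vec l}|\le 1$, then invoking the Bernstein-ellipse contour argument coordinatewise to get geometric decay $|a_{f,\vec l}|\lesssim \prod_i(1+l_i)^\beta\rho^{-l_i}$, and finally summing over $\{\|\vec l\|_\infty>L\}$ by a union bound. This is the standard route and is correct in outline. Your reading of the threshold $L>\tfrac{d}{2\log\rho}$ as the point beyond which the polynomial prefactor $(1+L)^{d/2}$ is dominated by $\rho^{-L}$ is also the right interpretation; in the exact statement the constant $K$ absorbs whatever residual polynomial factor remains. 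None of this is in the paper, which simply imports the bound as a black box, so your write-up would in fact be more self-contained than the original.
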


Now, let us present some properties of the Legendre polynomials, which will be used later.
They are bounded as follows: for any $l\in\mathbb{N}_0$ and $x\in[-1,1]$,
\begin{equation}
    |P_l(x)| \le P_l(1) = 1,
\end{equation}
which is the specific case of Eq. 18.14.1 in \cite{NIST:DLMF}.
We thus have
\begin{equation}
    |P_{\vec{l}}(x)| \le 1
    \label{eq:PlvecBnd}
\end{equation}
for any $\vec{l}\in\mathbb{N}_0^d$.

Besides, for the later convenience, we define
\begin{equation}
        \Lambda^\prime_L := \Lambda_L\setminus\{\vec{0}\}.
    \end{equation}

\subsection{Orthogonal series density estimation \label{sec:OSDE}}

OSDE is a technique to approximate the probability density function $f_{\vec{X}}$ of a $\mathbb{R}^d$-valued random variable $\vec{X}$ by a series of orthogonal functions \cite{Efromovich2010}.
According to the range of the values $\vec{X}$ can take, we choose some orthogonal function system.
Hereafter, we assume that the range is a hyperrectangle and more specifically, it is $\Omega_d$, without loss of generality.
Then, as orthogonal functions, we take the tensorized Legendre polynomials: $f_{\vec{X}} \approx \mathcal{P}_L[f_{\vec{X}}]=\sum_{\vec{l} \in \Lambda_L} a_{f_{\vec{X}},\vec{l}} P_{\vec{l}}$ with $L$ set appropriately.
Note that each coefficient $a_{f_{\vec{X}},\vec{l}}$ is written as the expectation of the corresponding basis function $P_{\vec{l}}(\vec{X})$:
\begin{equation}
    a_{f_{\vec{X}},\vec{l}}=C(\vec{l}) \int_{\Omega_d} f_{\vec{X}}(\vec{x})P_{\vec{l}}(\vec{x}) d\vec{x} = C(\vec{l}) \times \mathbb{E}_{\vec{X}}\left[P_{\vec{l}}(\vec{X})\right],
    \label{eq:coefOSDE}
\end{equation}
where $\mathbb{E}_{\vec{X}}[\cdot]$ denotes the expectation with respect to the randomness of $\vec{X}$.

Although OSDE usually refers to obtaining an orthogonal series approximation of the density function with sampled values of $\vec{X}$, in this paper, we consider not the sample-based estimation but the quantum algorithm to estimate the coefficients $a_{f_{\vec{X}},\vec{l}}$.
While the error in sample-based OSDE includes the error due to the finite series approximation and the statistical error due to the finite sample size, it is now sufficient to consider only the former, which is bounded as Eq. \eqref{eq:LegExAccu} if $f_{\vec{X}}$ satisfies the condition in Theorem \ref{th:LegExAccu}.
The error due to estimating the coefficients by the quantum method is discussed later.

A common issue in OSDE is that the estimated density function $\hat{f}_{\vec{X}}$ might not be bona fide.
Namely, we are afraid that it might violate some conditions that any density function must satisfy naturally.
First, the positive definiteness
\begin{equation}
    \hat{f}_{\vec{X}}(\vec{x}) \ge 0, \forall \vec{x}\in\Omega_d,
    \label{eq:OSDEPosi}
\end{equation}
might not hold.
In the later discussion on the correctness of our quantum algorithm, we will make some assumptions on the true density function $f_{\vec{X}}$ so that Eq. \eqref{eq:OSDEPosi} holds when the algorithm guarantees the estimation accuracy to some extent.
On the other hand, the requirement that the total probability is 1, namely
\begin{equation}
    \int_{\Omega_d} \hat{f}_{\vec{X}}(\vec{x}) d\vec{x} = 1,
    \label{eq:OSDEProb1}
\end{equation}
can be satisfied easily when we use the Legendre polynomial system.
The lowest-order polynomial, $P_{\vec{l}}$ with $\vec{l}=\vec{0}$, is $P_{\vec{0}}=1$, and plugging this into Eq. \eqref{eq:coefOSDE} yields $a_{f_{\vec{X}},\vec{0}}=2^{-d}$ for any distribution.
Besides, $\int_{\Omega_d} P_{\vec{l}}(\vec{x}) d\vec{x}=0$ holds for any $\vec{l}\ne\vec{0}$.
Combining these, we see that $\hat{f}_{\vec{X}}$ written in the form of $\sum_{\vec{l}} a_{f_{\vec{X}},\vec{l}} P_{\vec{l}}$ with $a_{f_{\vec{X}},\vec{0}}=2^{-d}$ satisfies Eq. \eqref{eq:OSDEProb1}.

A widely used indicator of the accuracy of the estimated density function $\hat{f}_{\vec{X}}$ is the mean integrated squared error (MISE):
\begin{equation}
    \mathbb{E}_{\rm Q}\left[\int_{\Omega_d}\left(\hat{f}_{\vec{X}}(\vec{x})-f_{\vec{X}}(\vec{x})\right)^2 d\vec{x}\right].
\end{equation}
Here, $\mathbb{E}_{\rm Q}[\cdot]$ denotes the expectation with respect to the randomness in the algorithm.
In our quantum algorithm proposed later, the randomness stems from the building-block algorithm QAE, or more fundamentally, the quantum nature of the algorithm.
With the MISE bounded by $\epsilon^2$, the error of the expectation of a function $g(\vec{X})$ of $\vec{X}$ estimated with $\hat{f}_{\vec{X}}$ is also bounded:
\begin{align}
    &\left|\mathbb{E}_{\rm Q}\left[\int_{\Omega_d}\hat{f}_{\vec{X}}(\vec{x})g(\vec{x})d\vec{x}-\int_{\Omega_d}f_{\vec{X}}(\vec{x})g(\vec{x})d\vec{x} \right]\right| \nonumber \\
    \le & \left(\mathbb{E}_{\rm Q}\left[\int_{\Omega_d}\left(\hat{f}_{\vec{X}}(\vec{x})-f_{\vec{X}}(\vec{x})\right)^2 d\vec{x}\right]\right)^{\frac{1}{2}} \cdot \left(\int_{\Omega_d}\left(g(\vec{x})\right)^2 d\vec{x}\right)^{\frac{1}{2}} \nonumber \\
    =&O(\epsilon),
\end{align}
where we use the Cauchy–Schwarz inequality.
This matches some situations including financial derivative pricing described later, where we aim to get the derivative price as the expected payoff.

\subsection{Quantum circuits for arithmetics \label{sec:arithCircuit}}

In this paper, we consider computation on systems consisting of multiple quantum registers. 
We use the fixed-point binary representation for real numbers and, for each $x\in\mathbb{R}$, we denote by $\ket{x}$ the computational basis state on a quantum register that holds the bit string equal to the binary representation of $x$.
For $\vec{x}=(x_1,\cdots,x_d)\in\mathbb{R}^d$, we define the quantum state on a $n$-register system $\ket{\vec{x}}:=\ket{x_1}\cdots\ket{x_d}$.
We assume that every register has a sufficiently large number of qubits and thus neglect errors caused by finite-precision representation.

We can perform arithmetic operations on numbers represented on registers.
For example, we can implement quantum circuits for four basic arithmetic operations such as addition $O_{\mathrm{add}}:\ket{a}\ket{b}\ket{0}\mapsto\ket{a}\ket{b}\ket{a+b}$ and multiplication $O_{\mathrm{mul}}:\ket{a}\ket{b}\ket{0}\mapsto\ket{a}\ket{b}\ket{ab}$, where $a,b\in\mathbb{Z}$.
For concrete implementations, see \cite{MunosCoreas2022} and the references therein.
In the finite-precision binary representation, these operations are immediately extended to those for real numbers.
Furthermore, using the above circuits, we obtain a quantum circuit $U_p$ to compute a polynomial $p(x)=\sum_{n=0}^N a_n x^n$ on real numbers $x$: $U_p\ket{x}\ket{0}=\ket{x}\ket{p(x)}$.
Also for $p$ as an elementary function such as $\exp$, $\sin$/$\cos$, and so on, we have a similar circuit to compute it, given a piecewise polynomial approximation of $p$ \cite{haner2018optimizing}.
Hereafter, we collectively call these circuits arithmetic circuits.

\subsection{Quantum amplitude estimation with suppressed bias}

QAE \cite{brassard2002} is a quantum algorithm to estimate the squared amplitude of a target basis state in a quantum state and is used in various quantum algorithms as a core subroutine.
In this paper, to be concrete, we consider the case where the target basis state and the other ones are distinguished by whether a specific qubit takes $\ket{1}$ or $\ket{0}$.
That is, we suppose that we are given the following oracle $A$ to generate a quantum state $\ket{\psi}$ on a system $S$ consisting of a register $R_1$ and a qubit $R_2$:
\begin{equation}
    A \ket{0}\ket{0} = \sqrt{a}\ket{\psi_1}\ket{1} + \sqrt{1-a} \ket{\psi_0}\ket{0} =: \ket{\psi},
    \label{eq:StatePrepOra}
\end{equation}
where $\ket{\psi_0}$ and $\ket{\psi_1}$ are some quantum states on $R_1$.
Although QAE can be considered in more general settings, the above setting is enough for the purpose of this paper.
Then, we can construct a quantum algorithm that outputs an estimation $\hat{a}$ of $a$ within accuracy $\epsilon$ with high probability, making $O(1/\epsilon)$ uses of $A$ along with some elementary quantum gates.

The original version of QAE only guarantees that $\hat{a}$ is close to $a$ within distance $\epsilon$ with high probability and does not care about the bias of $\hat{a}$; $\mathbb{E}[\hat{a}-a]$ might not be 0.
Some previous works proposed versions of QAE whose outputs have reduced biases~\cite{callison2022improved,Cornelissen2023,lu2023random}.
In particular, Ref. \cite{Cornelissen2023} presented a version of QAE that outputs an estimation with the bias suppressed to an arbitrary level with the logarithmic overhead in the query complexity.
We here present a theorem on its accuracy and complexity taken from Ref. \cite{Cornelissen2023} with some modification.

\begin{theorem}[Theorem 2.2 in \cite{Cornelissen2023}, modified]

Let $\delta,\epsilon\in(0,1)$.
Suppose that we have access to the oracle $A$ that acts as Eq. \eqref{eq:StatePrepOra} with $a\in(0,1)$.
Then, there exists a quantum algorithm $\proc{UBQAE}(A,\epsilon,\delta)$ that outputs a random real number $\hat{a}$ such that
\begin{equation}
    \left|\mathbb{E}[\hat{a}-a]\right| \le \delta
    \label{eq:QAEBias}
\end{equation}
and
\begin{equation}
    \mathbb{E}\left[\left(\hat{a}-a\right)^2\right] \le \epsilon^2 + \delta
    \label{eq:QAEVar}
\end{equation}
holds, making
\begin{equation}
    O\left(\frac{1}{\epsilon}\log\log\left(\frac{1}{\epsilon}\right)\log\left(\frac{1}{\delta\epsilon}\right)\right)
\end{equation}
queries to $A$.
\label{th:UBQAE}
\end{theorem}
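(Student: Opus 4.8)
The plan is to prove this statement as a reduction to the original Theorem 2.2 of \cite{Cornelissen2023}, treating the present theorem as a repackaging of the guarantees proved there. First I would recall the precise form of that original result, which provides, for the same oracle $A$ and the same parameters $\epsilon,\delta$, an estimator $\hat{a}$ together with two separate guarantees: a bias bound of the form $|\mathbb{E}[\hat{a}-a]|\le\delta$ and a dispersion bound controlling the spread of $\hat{a}$ about its mean, i.e. $\mathrm{Var}[\hat{a}]\le\epsilon^2$, both achieved within the stated query cost. The bias bound \eqref{eq:QAEBias} is then immediate, since it coincides verbatim with the original bias guarantee, and no further work is needed for it.

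The remaining content is the second-moment bound \eqref{eq:QAEVar}, which I would obtain from the standard bias--variance decomposition
\begin{equation}
    \mathbb{E}\left[(\hat{a}-a)^2\right] = \mathrm{Var}[\hat{a}] + \left(\mathbb{E}[\hat{a}]-a\right)^2 .
\end{equation}
Bounding the first term by $\epsilon^2$ via the original variance guarantee and the second by $\delta^2$ via \eqref{eq:QAEBias}, and finally using $\delta\in(0,1)$ so that $\delta^2\le\delta$, yields $\mathbb{E}[(\hat{a}-a)^2]\le\epsilon^2+\delta^2\le\epsilon^2+\delta$, which is precisely \eqref{eq:QAEVar}. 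The deliberate loosening from $\delta^2$ to $\delta$ lets the single parameter $\delta$ control the bias while the second moment inherits the same parameter, which is the form the downstream error-accumulation analysis will use. The query complexity is carried over directly from the original theorem; the only care needed is to confirm that the logarithmic factor stated in \cite{Cornelissen2023} can be written in the combined form $\log(1/(\delta\epsilon))$ quoted here, for instance by absorbing $\log(1/\delta)+\log(1/\epsilon)$ into a single logarithm up to a constant factor.

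I expect the main obstacle to be one of bookkeeping rather than of mathematical substance: reconciling the exact form in which \cite{Cornelissen2023} states its guarantees---in particular whether the dispersion bound is phrased as a variance bound, a standard-deviation bound, or a high-probability confidence interval that must first be converted into a moment bound---with the clean moment-form bounds \eqref{eq:QAEBias}--\eqref{eq:QAEVar} used here. A secondary point to verify is that no hidden dependence on $a$ (for example a degradation of the variance estimate as $a\to 0$ or $a\to 1$) enters the bound, so that the query cost is genuinely uniform over $a\in(0,1)$ as claimed. Once the original statement is pinned down in the variance form above, the modification reduces to the two-line decomposition argument sketched here.
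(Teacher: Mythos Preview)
The paper does not give a formal proof of this theorem at all; it states the result as a modification of Theorem~2.2 in \cite{Cornelissen2023} and then, in two explanatory paragraphs, identifies what the modifications actually are. Those modifications are not the ones you focus on. The paper's two points are: (i) the original theorem assumes access to the reflection operators $I-2\ket{\psi}\bra{\psi}$ and $I_S-2I_{R_1}\otimes\ket{1}\bra{1}$ rather than to $A$ directly, and one must argue that in the present setting these are realized with $O(1)$ uses of $A$ and a single Pauli~$Z$, so that counting queries to $A$ is legitimate; and (ii) the original algorithm is non-destructive and its query bound is stated only in expectation, whereas here one drops the state-restoration step of the non-destructive coin flip, so the iteration count becomes bounded and the query complexity becomes a deterministic worst-case bound. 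Your proposal does not mention either of these, yet they are precisely the content the paper flags as needing justification.

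Your bias--variance argument for passing from a variance bound to the second-moment bound \eqref{eq:QAEVar} is mathematically fine, and your remark about combining $\log(1/\delta)+\log(1/\epsilon)$ into $\log(1/(\delta\epsilon))$ is correct. But the paper does not single out the form of the error bounds as something that was modified, which suggests that \eqref{eq:QAEBias} and \eqref{eq:QAEVar} are already close to what \cite{Cornelissen2023} states; your decomposition may therefore be addressing a non-issue. Conversely, the oracle-model translation and the deterministic-versus-expected complexity discussion are genuine gaps in your plan that the paper considers essential to the ``modified'' label.
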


We give some comments on the difference between Theorem \ref{th:UBQAE} in the above and Theorem 2.2 in \cite{Cornelissen2023}.

First, in Theorem 2.2 in \cite{Cornelissen2023}, it is assumed that we have access to the two reflection operators, which do not appear in Theorem \ref{th:UBQAE} apparently.
The first one is $I-2\ket{\psi}\bra{\psi}$, the reflection operator with respect to $\ket{\psi}$, and, in the current case, constructed with $O(1)$ uses of $A$ and some elementary gates \cite{brassard2002}.
The second one is the reflection operator with respect to the target state.
In the current case, it is $U_1:=I_S-2I_{R_1}\otimes \ket{1}\bra{1}$, where $I_S$ and $I_{R_1}$ are the identity operators on $S$ and $R_1$, respectively, and $U_1$ is just the Pauli $Z$ gate on $R_2$.
Thus, assuming naturally that $A$ is much more costly than $U_1$, we do not consider the number of queries to it in Theorem \ref{th:UBQAE}.

Second, the method considered in Theorem 2.2 in \cite{Cornelissen2023} is non-destructive: it is assumed that a single copy of the quantum state $\ket{\psi}$ is given initially, and after the estimation of $a$, $\ket{\psi}$ is restored.
In this paper, we do not require this property, and we use the oracle $A$ to generate $\ket{\psi}$ repeatedly.
Thus, we do not perform the process to restore $\ket{\psi}$ in the non-destructive coin flip, a subroutine in the method in \cite{Cornelissen2023}, which is an iterative procedure with iteration number unbounded.
Therefore, we give just a deterministic upper bound on the query complexity in Theorem \ref{th:UBQAE}, while Theorem 2.2 in \cite{Cornelissen2023} gives the complexity upper bound as an expectation.

\subsection{Quantum Monte Carlo Integration \label{sec:QMCI}}

Among quantum algorithms built upon QAE, quantum Monte Carlo integration (QMCI) \cite{montanaro2015} is particularly prominent.
We now briefly outline it.
The aim of QMCI is to calculate the expected value of a random variable.
Let us suppose that we want to calculate the expectation $\mathbb{E}_{\vec{X}}[g(\vec{X})]$ for $g:\Omega_d\rightarrow[0,1]$.
To use QMCI, we assume that we are given the following quantum circuits.
The first one is the state preparation oracle $U_{\vec{X}}$ that acts as
\begin{equation}
    U_{\vec{X}} \ket{0} = \sum_{\vec{x} \in \mathcal{X}} \sqrt{\tilde{f}_{\vec{X}}(\vec{x})} \ket{\vec{x}}.
    \label{eq:SPOra}
\end{equation}
Here, $\mathcal{X} \subset \Omega_d$ is a finite set and $\tilde{f}_{\vec{X}}$ is a map from $\mathcal{X}$ to $[0,1]$ such that $\sum_{\vec{x}\in\mathcal{X}}\tilde{f}_{\vec{X}}(\vec{x})=1$.
While $\vec{X}$ is originally a random variable that takes continuous values in $\Omega_d$ with density $f_{\vec{X}}$, we here introduce a kind of discretized approximation such that $\vec{X}$ takes only values $\vec{x}$ in $\mathcal{X}$ with probability $\tilde{f}_{\vec{X}}(\vec{x})$. 
The second oracle we use is $U_g$ that acts as
\begin{equation}
    U_g \ket{\vec{x}} \ket{0} = \ket{\vec{x}} \left(\sqrt{g(\vec{x})}\ket{1}+\sqrt{1-g(\vec{x})}\ket{0}\right)
    \label{eq:IntegrandOra}
\end{equation}
for any $\vec{x}\in\mathcal{S}$.
Combining $U_{\vec{X}}$ and $U_g$, we construct a quantum circuit to generate the quantum state
\begin{equation}
    \sum_{\vec{x} \in \mathcal{S}} \sqrt{\tilde{f}_{\vec{X}}(\vec{x}) g(\vec{x})} \ket{\vec{x}}\ket{1}+\sum_{\vec{x} \in \mathcal{S}}\sqrt{\tilde{f}_{\vec{X}}(\vec{x}) \left(1-g(\vec{x})\right)} \ket{\vec{x}}\ket{0},
\end{equation}
and using it, we estimate $\sum_{\vec{x} \in \mathcal{S}}\tilde{f}_{\vec{X}}(\vec{x}) g(\vec{x})$ as an approximation of $\mathbb{E}_{\vec{X}}[g(\vec{X})]$ by QAE.

These oracles are in fact implementable in some cases.
For $U_{\vec{X}}$, many methods for function encoding into quantum states have been proposed so far, including the pioneering Grover-Rudolph method \cite{grover2002creating} and other improved methods~\cite{Sanders2019,zoufal2019quantum,Holmes2020,Endo2020,Rattew2021efficient,kaneko2022quantum,rattew2022preparing,mcardle2022quantum,MarinSanchez2023,Moosa_2024}.
In these methods, the gate cost scales logarithmically with respect to the number of the points in $\mathcal{X}$, which means that we can take an exponentially fine grid as $\mathcal{X}$ and thus $\sum_{\vec{x} \in \mathcal{S}}\tilde{f}_{\vec{X}}(\vec{x}) g(\vec{x})$ approximates $\mathbb{E}_{\vec{X}}[g(\vec{X})]$ well.

Let us make some additional remarks on $U_{\vec{X}}$.
First, more generally than Eq.~\eqref{eq:SPOra}, we may consider $U_{\vec{X}}$ that, accompanying ancilla qubits, acts as
\begin{align}
U_{\vec{X}} \ket{0}\ket{0} = \sum_{\vec{x} \in \mathcal{X}} \sqrt{\tilde{f}_{\vec{X}}(\vec{x})} \ket{\vec{x}}\ket{\phi_{\vec{x}}},
\label{eq:SPOra2}
\end{align}
where $\ket{\phi_{\vec{x}}}$ is a quantum state on the ancillas that may depend on $\vec{x}$. This makes no essential change in the following discussion, and we assume the form of $U_{\vec{X}}$ as Eq.~\eqref{eq:SPOra} for simplicity.
Second, the ability to efficiently prepare such a state may seem to contradict the previous work \cite{Herbert2021NoQuantumSpeedup}, which argues that the Grover-Rudolph method \cite{grover2002creating} involved in QMCI does not yield any quantum speed-up. In the Grover-Rudolph method, we compute the integrals of the density $f_{\vec{X}}$ over small intervals, for which Ref.~\cite{grover2002creating} originally proposed to use the classical Monte Carlo method, and, according to Ref.~\cite{Herbert2021NoQuantumSpeedup}, this ruins the quantum speed-up. Fortunately, there are some ways to avoid this. Ref.~\cite{kaneko2022quantum} pointed out that, for some distributions, the aforementioned integrals of $f_{\vec{X}}$ can be analytically computed, which enables efficient encoding of $f_{\vec{X}}$. The methods in \cite{Holmes2020,mcardle2022quantum,Moosa_2024} approximate $f_{\vec{X}}$ with functions that can be easily encoded. Also note that, if $\vec{X}$ is computed by an analytic formula $\vec{X}=h(\vec{W})$ from another random variable $\vec{W}$ for which we have the encoding oracle $U_{\vec{W}}$, we also have the oracle like Eq.~\eqref{eq:SPOra2} by combining $U_{\vec{W}}$ with the arithmetic circuit for $h$.

When it comes to $U_g$, if we assume that $g$ is explicitly given as an elementary function, $U_g$ is constructed with arithmetic circuits and $U_{\rm rot}$.
Here, $U_{\rm rot}$ is the Y-rotation gate with controlled angle, which acts as $U_{\rm rot}\ket{\theta}\ket{\psi}=\ket{\theta}\begin{pmatrix} \cos(\theta/2) & -\sin(\theta/2) \\  \sin(\theta/2) & \cos(\theta/2)\end{pmatrix} \ket{\psi}$ for any $\theta\in\mathbb{R}$ and 1-qubit state $\ket{\psi}$, and its implementation is considered in \cite{woerner2019quantum}.

\subsection{Applying QMCI to derivative pricing \label{sec:QMCI4Pricing}}

Among various applications of QMCI proposed so far, applications to derivative pricing are particularly prominent \cite{Rebentrost2018,Stamatopoulos2020optionpricingusing,Chakrabarti2021thresholdquantum,miyamoto2022bermudan,kaneko2022quantum,doriguello2022,wang2023option}.
Although we leave a detailed explanation of it to textbooks such as \cite{hull2003options}, we now provide the outline briefly.
A derivative is a two-party contract in which a party receives payoffs determined by the prices of some underlying assets (say, stocks) from the other.
A simple example is a European call option: it grants a party the right to buy an asset at a prefixed price $K$ at a future time $T$, which is equivalent to getting a payoff $\max\{X(T)-K,0\}$, where $X(t)$ is the asset price at time $t$.
To diversify trading strategies and improve risk management, various kinds of derivatives are traded in the financial market.
To price a derivative with a payoff $g_{\rm pay}(\vec{X}(T))$ determined by $\vec{X}(T)=(X_1(T),\cdots,X_d(T))$, the prices of the $d$ underlying assets at $T$, the standard approach is as follows.
We first model the dynamics of $\vec{X}(t)$ with a stochastic differential equation (SDE)
\begin{align}
d\vec{X}(t)=\vec{\mu}(t,\vec{X}(t))dt+\Sigma(t,\vec{X}(t))d\vec{W}(t),
\label{eq:SDE}
\end{align}
where $\vec{W}(t)$ is a $d^\prime$-dimensional Brownian motion, and $\vec{\mu}$ and $\Sigma$ are $\mathbb{R}^d$ and $\mathbb{R}^{d \times d^\prime}$-valued functions of $(t, \vec{X}(t))$, respectively.
Then, the derivative price is given by the expected payoff\footnote{Strictly speaking, the discount factor must be involved.} $\mathbb{E}_{\vec{X}(T)}[g_{\rm pay}(\vec{X}(T))]$.

We can estimate this type of expectation by QMCI if we have quantum circuits $U_{\vec{X}(T)}$ like Eq.~\eqref{eq:SPOra} and $U_{g_{\rm pay}}$ like Eq.~\eqref{eq:IntegrandOra}.
We can in fact implement $U_{\vec{X}(T)}$ for various models of the asset price dynamics as discussed in \cite{Rebentrost2018,Stamatopoulos2020optionpricingusing,Chakrabarti2021thresholdquantum,kaneko2022quantum,wang2023option}.
We discretize Eq.~\eqref{eq:SDE} in time by some scheme, e.g., the Euler-Maruyama scheme~\cite{Maruyama1955} $\vec{X}(t+\Delta t)=\vec{X}(t)+\vec{\mu}(t,\vec{X}(t))\Delta t+\Sigma(t,\vec{X}(t))\Delta \vec{W}$, where $\Delta \vec{W}$ is a vector of normal random variables, and iterating this discretized evolution yields $\vec{X}(T)$.
Since the normal distribution can be efficiently encoded into a quantum state~\cite{kaneko2022quantum,Holmes2020,Rattew2021efficient,MarinSanchez2023}, we can construct a quantum circuit to encode the density of $\vec{X}(T)$, namely $U_{\vec{X}(T)}$, similarly to Eq.~\eqref{eq:SPOra2}.
Becides, because $g_{\rm pay}$ is usually given as a simple function, $U_{g_{\rm pay}}$ can be implemented with arithmetic circuits mentioned in Sec.~\ref{sec:arithCircuit}.

\section{Quantum circuit for the time evolution of a stochastic process \label{sec:QCTimeEv}}

We now focus on how to implement the state preparation oracle $U_{\vec{X}}$ in the case that $\vec{X}$ is a value of some stochastic process $\vec{X}(t)$ at a fixed time $t=T$, as in derivative pricing mentioned above.
This is motivated by the issue described below.

In this case, usually, we implement $U_{\vec{X}}$ as a circuit to simulate the time evolution of the stochastic process.
The meaning of this is as follows.
Let us consider a $\Omega_d$-valued stochastic process $\{\vec{X}(t_i)\}_{i\in[N]_0}$ on the $(N+1)$-point discrete time $t_0,\cdots,t_N=T$, with the initial value $\vec{X}(t_0)$ being deterministic.
We denote by $p_i(\cdot|\cdot)$ the conditional transition probability density at each time $t_i$:
\begin{equation}
    {\rm Pr}\left\{\vec{X}(t_{i+1}) \in \mathcal{A} \middle| \vec{X}(t_i)=\vec{x}_i\right\} = \int_\mathcal{A} p_i(\vec{x}_{i+1}|\vec{x}_i) d \vec{x}_{i+1},
    \label{eq:TranProb}
\end{equation}
where $\mathcal{A}\subset\Omega_d$.
The density function of $\vec{X}(t_i)$ is then given by
\begin{align}
    &f_i(\vec{x}):= \nonumber \\
    &
    \begin{dcases}
        p_0(\vec{x}_1|\vec{x}_0) & ; \ i=1 \\
        \int_{\Omega_d} d\vec{x}_1\cdots \int_{\Omega_d} d\vec{x}_{i-1} \prod_{j=0}^{i-1} p_j(\vec{x}_{j+1}|\vec{x}_j) & ; \ i=2,...,N
    \end{dcases}
    .
\end{align}
We assume that, as is the case when $p_i$ is given via some SDE as described in Sec.~\ref{sec:QMCI4Pricing}, we can construct the oracle $U_{p_i}$ to generate the quantum state encoding the transition probability:
\begin{equation}
    U_{p_i}\ket{\vec{x}_i}\ket{0}=\sum_{\vec{x}_{i+1}\in\mathcal{X}_{i+1}} \sqrt{\tilde{p}_{i}(\vec{x}_{i+1}|\vec{x}_i)} \ket{\vec{x}_i}\ket{\vec{x}_{i+1}},
    \label{eq:Upi}
\end{equation}
where, again, we introduce the discretized approximation such that conditioned on $\vec{X}(t_{i})=\vec{x}_i$, $\vec{X}(t_{i+1})$ takes the values in a finite set $\mathcal{X}_{i+1}$ with a probability mass function $\tilde{p}_{i+1}(\cdot|\vec{x}_i)$. 
Given $U_{p_0},\cdots,U_{p_{N-1}}$, we combine them into the quantum circuit shown in Fig. \ref{fig:UvecXWhole} to obtain the state preparation oracle $U_{\vec{X}}$ for $\vec{X}=\vec{X}(t_N)$: 
\begin{align}
    & U_{\vec{X}(t_N)}\ket{\vec{x}_0}\ket{0}^{\otimes N} \nonumber \\
    =& \sum_{\vec{x}_1\in\mathcal{X}_1} \cdots \sum_{\vec{x}_{N}\in\mathcal{X}_{N}}  \sqrt{\prod_{i=0}^{N-1}\tilde{p}_i(\vec{x}_{i+1}|\vec{x}_i)}  \ket{\vec{x}_0}\ket{\vec{x}_1}\cdots\ket{\vec{x}_{N}} \nonumber \\
    =&\sum_{\vec{x}_{N}\in\mathcal{X}_{N}} \sqrt{\tilde{f}_{N}(\vec{x}_{N})} \ket{\psi(\vec{x}_{N})} \ket{\vec{x}_{N}}=:\ket{\tilde{f}_N},
    \label{eq:UXFromUpi}
\end{align}
where for $j\in[N]$,
\begin{equation}
    \tilde{f}_{j}(\vec{x}_{j}) := \sum_{\vec{x}_1\in\mathcal{X}_1} \cdots \sum_{\vec{x}_{j-1}\in\mathcal{X}_{j-1}} \prod_{i=0}^{j-1}\tilde{p}_i(\vec{x}_{i+1}|\vec{x}_i)
\end{equation}
and
\begin{align}
    &\ket{\psi(\vec{x}_{j})} := \frac{1}{\sqrt{\tilde{f}_j(\vec{x}_{j})}} 
\times \nonumber \\
    & \sum_{\vec{x}_1\in\mathcal{X}_1} \cdots \sum_{\vec{x}_{j-1}\in\mathcal{X}_{j-1}}  \sqrt{\prod_{i=0}^{j-1}\tilde{p}_i(\vec{x}_{i+1}|\vec{x}_i)}  \ket{\vec{x}_0}\ket{\vec{x}_1}\cdots\ket{\vec{x}_{j-1}}.
\end{align}
Regarding the first register in the last line in Eq. \eqref{eq:UXFromUpi} as ancillary and adopting the discrete approximation that $\vec{X}(t_N)$ takes $\vec{x}_{N}\in\mathcal{X}_{N}$ with probability $\tilde{f}_{N}(\vec{x}_{N})$, we can see $U_{\vec{X}(t_N)}$ in Eq. \eqref{eq:UXFromUpi} as a state-preparation oracle like Eq. \eqref{eq:SPOra}.

\begin{figure}
\captionsetup{justification=raggedright}
\begin{center}
\subfigure[]{
\begin{quantikz}[row sep={0.5cm,between origins}, column sep={0.3cm}]
\lstick{$\ket{\vec{x}_i}$} & \gate[2]{U_{p_i}} & \qw & \rstick{$\ket{\vec{x}_i}$}\\
\lstick{$\ket{0}$} & & \qw & \rstick{$\sum_{\vec{x}_{i+1}\in\mathcal{X}_{i+1}} \sqrt{\tilde{p}(\vec{x}_{i+1}|\vec{x}_{i})} \ket{\vec{x}_{i+1}}$}
\end{quantikz}
}
\subfigure[]
{
\begin{quantikz}[row sep={0.5cm,between origins}, column sep={0.3cm}]
\lstick{$\ket{\vec{x}_0}$} & \gate[2]{U_{p_0}} & \qw  & \qw & \cdots & & \qw & \qw & \rstick{$\vec{x}_0$}\\
& & \gate[2]{U_{p_1}} & \qw  & \cdots & & \qw & \qw & \rstick{$\vec{x}_1$}\\
& \qw & \qw & \qw & \cdots & & \qw & \qw & \rstick{$\vec{x}_{2}$}\\
& \vdots & & & & & & \\
& \qw & \qw & \qw & \cdots & & \gate[2]{U_{p_{N-1}}} & \qw & \rstick{$\vec{x}_{N-1}$}\\
& \qw & \qw & \qw & \cdots & & & \qw& \rstick{$\vec{x}_{N}$}
\end{quantikz}
\label{fig:UvecX}
}
\caption{(a) The quantum circuit for time evolution over a time step, which, given the value of $\vec{X}(t_{i})$, generates the superposition of the possible values of $\vec{X}(t_{i+1})$ like Eq. \eqref{eq:Upi}. (b) The state preparation circuit $U_{\vec{X}(t_N)}$ for $\vec{X}(t_{i+1})$  is constructed by combining $U_{p_0},\cdots,U_{p_{N-1}}$ in series.}
\label{fig:UvecXWhole}
\end{center}
\end{figure}

Hereafter, for a set $\mathcal{U}$ of quantum circuits, we say that the $\mathcal{U}$-depth of a quantum circuit $C$ is $D$ if $D$ elements in $\mathcal{U}$ are combined in series in $C$.
For example, the $\mathcal{U}_p$-depth of the circuit in Fig. \ref{fig:UvecX} is $N$, where $\mathcal{U}_p:=\{U_{p_0},\ldots,U_{p_{N-1}}\}$.
If $\mathcal{U}$ consists of one circuit $U$, we use the term $U$-depth.

The quantum circuit $U_{\vec{X}(t_N)}$ has $O(N)$ $\mathcal{U}_p$-depth, and for large $N$, this might be an issue in implementing $U_{\vec{X}(t_N)}$ on real quantum hardware, especially in the early-FTQC era, in which the circuit depth might be limited.
By this, we are motivated to divide $U_{\vec{X}(t_N)}$ into some pieces.

In addition to this hardware perspective, we point out the following issue: in quantum computing, we need to simulate $\vec{X}(t)$ from the initial time $t_0$ every time we get the distributions at $t_1,\ldots,t_N$.
Although in the above, we have described the state preparation for $\vec{X}(t)$ at the terminal time $t=t_N$, we often need the states encoding the distributions of $\vec{X}(t)$ at the intermediate times $t_1,\ldots,t_{N-1}$ too.
For example, in practice, we often price derivative contracts that are written on the same asset but have different maturities.
That is, for derivatives with payoffs $g_1(\vec{X}(t_1)),\ldots,g_N(\vec{X}(t_N))$ determined by the values of the asset price $\vec{X}(t)$ at different times $t=t_1,\ldots,t_N$, we estimate their prices $V_1:=\mathbb{E}_{\vec{X}(t_1)}[g_1(\vec{X}(t_1))],\ldots,V_N:=\mathbb{E}_{\vec{X}(t_N)}[g_N(\vec{X}(t_N))]$.
In runs of QMCI to estimate these, we use $U_{\vec{X}(t_1)},\ldots,U_{\vec{X}(t_N)}$ to generate the quantum states $\ket{f_1},\ldots,\ket{f_N}$ encoding the distributions at $t_1,\ldots,t_N$.
Note that $U_{\vec{X}(t_1)},\ldots,U_{\vec{X}(t_N)}$ are operated on the same initial state $\ket{\vec{x}_0}\ket{0}^{\otimes N}$.
They have $O(N)$ $\mathcal{U}_p$-depth each and make $O(N^2)$ queries to oracles in $\mathcal{U}_p$ in total.
This scaling on $N$ is worse than that of the sample complexity of classical Monte Carlo integration, which scales as $O(N)$.
Classically, we can store the intermediate values of $\vec{X}(t)$ on memory: having $\vec{X}^{(1)}(t_i),\ldots,\vec{X}^{(N_{\rm path})}(t_i)$, $N_{\rm path}$ sample values of $\vec{X}(t_i)$, we can use them to calculate $\frac{1}{N_{\rm path}}\sum_{j=1}^{N_{\rm path}} g_i(\vec{X}^{(j)}(t_i))$ as an approximation of $V_i$ and also get $\vec{X}^{(1)}(t_{i+1}),\ldots,\vec{X}^{(N_{\rm path})}(t_{i+1})$, sample values at the next time $t_{i+1}$, evolved from $\vec{X}^{(1)}(t_i),\ldots,\vec{X}^{(N_{\rm path})}(t_i)$.
This means that in the classical Monte Carlo method for $V_1,\ldots,V_N$, the total number of sampling scales on $N$ as $O(N)$.

\section{Dividing quantum circuits for the time evolution of stochastic processes \label{sec:main}}

\subsection{Idea}

Motivated by the above issues, we propose a method to divide $U_{\vec{X}(t_N)}$ based on OSDE\footnote{We refer to a previous work \cite{Herbert2022quantummontecarlo}, which also proposed using function approximation with a series of basis functions in the context of QMCI. In their method, the series approximation of not the density but the integrand is classically precomputed. Then, the expectations of basis functions are estimated by QMCI, and these estimates are combined with series expansion coefficients, yielding an estimate of the wanted integral.}.

Our aim is to generate the quantum state $\ket{f_N}$ that encodes the probability density $f_N$ of the value of the stochastic process $\vec{X}(t)$ at time $t=t_N$.
Since we are given only its initial value $\vec{X}(0)$ and the transition probability $p_i$, we need to apply the $N$ operators $U_{p_0},\ldots,U_{p_{N-1}}$.
However, what if we know the probability density $f_i$ of $\vec{X}(t)$ at some intermediate time $t_i,i\in[N-1]$?
If so, starting from the quantum state $\ket{f_i}$ encoding $f_i$, we can get $\ket{f_N}$ only by applying the operators $U_{p_i},\ldots,U_{p_{N-1}}$ , which results in a smaller circuit depth and query number.
Of course, we do not know any of the intermediate density functions $\{f_i\}_{i\in[N-1]}$ beforehand.
Then, how about estimating them?
That is, we conceive the method described as the loop of the following steps:
\begin{enumerate}
    \item Given (an approximation of) $f_i$, generate the quantum state $\ket{f_i}$ that encodes it.
    \item By operating $U_{p_i}$, generate the quantum state $\ket{f_{i+1}}$ that encodes $f_{i+1}$
    \item Using $\ket{f_{i+1}}$, get an approximation of $f_{i+1}$.
\end{enumerate}
We require that, in step 3, we get some classical data that determine the approximating function.
This means that each round of this loop starts by inputting the classical data and ends by outputting the other classical data.
If this is possible, we can realize the rounds of the above loop as separate quantum algorithms, which means that dividing $U_{\vec{X}(t_N)}$ is achieved.

We then concretize the above rough sketch of the method.
In particular, given the quantum state $\ket{f_{i+1}}$, how can we get an approximation function of $f_{i+1}$ in a way that allows for some classical description?
We can do this by OSDE described in Sec. \ref{sec:OSDE}.
In the quantum algorithm presented later, we estimate the Legendre expansion coefficients $\{a_{f_{i+1},\vec{l}}\}_{\vec{l}}$ for $f_{i+1}$, which is given via the expectations of $\{P_{\vec{l}}(\vec{X}(t_{i+1}))\}_{\vec{l}}$ as shown in \eqref{eq:coefOSDE}, using QMCI.
These coefficients are just a set of real numbers, and thus we can have them as classical data.

\subsection{Quantum algorithm}

Now, we present a detailed description of the aforementioned method.
We start by presenting the informal version of our main theorem on the accuracy and complexity of our algorithm.
In the proof, we only show the procedure of our algorithm.
The exact statement of the theorem and the rest of the proof, which is on the accuracy and complexity, are given in Appendix \ref{app:proofMain}.

\begin{theorem}[simplified]
    Let $\alpha,\delta,\epsilon\in(0,1)$.
    Let $\{\vec{X}(t_{i})\}_{i\in[N]_0}$ be a $\Omega_d$-valued stochastic process with the deterministic initial value $\vec{x}_0$ and the conditional transition probability density $p_i(\cdot|\cdot)$.
    Set $L\in\mathbb{N}$ sufficiently large.
    Suppose that the following assumptions on the access to oracles hold:
    \begin{itemize}
        \item 
        For each $i\in[N-1]_0$, we have access to the oracle $U_{p_i}$ acting as Eq. \eqref{eq:Upi}.
        Here, $\mathcal{X}_1,\ldots,\mathcal{X}_N\subset\Omega_d$ are finite sets, $\mathcal{X}_0:=\{\vec{x}_0\}$, and $\tilde{p}_i:\mathcal{X}_{i+1}\times\mathcal{X}_i\rightarrow[0,1],i\in[N-1]_0$ satisfies $\sum_{\vec{x}_{i+1}\in\mathcal{X}_{i+1}}\tilde{p}_i(\vec{x}_{i+1}|\vec{x}_i)=1$ for any $\vec{x}_i\in\mathcal{X}_i$ and approximates $p_i$.

        \item For $f:\Omega_d\rightarrow\mathbb{R}_{\ge 0}$ written as $f=\sum_{\vec{l}\in\Lambda_L} a_{\vec{l}} P_{\vec{l}}$ with any real number set $\{a_{\vec{l}}\}_{\vec{l}\in\Lambda_L}$ including $a_{\vec{0}}=2^{-d}$, we have access to the oracle $U^{\rm SP}_{f}$ that acts as
        \begin{equation}
            U^{\rm SP}_{f}\ket{0} = \ket{f} := \sum_{\vec{x}\in\mathcal{X}_{f}} \sqrt{\tilde{f}(\vec{x})} \ket{\vec{x}}.
            \label{eq:USPf}
        \end{equation}
        Here, $\mathcal{X}_{f} \subset \Omega_d$ is a finite set, and $\tilde{f}:\mathcal{X}_{f}\rightarrow[0,1]$ is a map that satisfies $\sum_{\vec{x}\in\mathcal{X}_{f}}\tilde{f}(\vec{x})=1$ and approximates $f$.

        \item For any $\vec{l}\in\mathbb{N}_0^d$, we have access to the oracle $U^{\rm amp}_{P_{\vec{l}}}$ that acts as
        \begin{equation}
        U^{\rm amp}_{P_{\vec{l}}}\ket{\vec{x}}\ket{0} =  \ket{\vec{x}} \left(\sqrt{\frac{1+P_{\vec{l}}(\vec{x})}{2}}\ket{1}+\sqrt{\frac{1-P_{\vec{l}}(\vec{x})}{2}}\ket{0}\right)
        \label{eq:UPl}
        \end{equation}
        for any $\vec{x}\in\mathbb{R}^d$.
        
    \end{itemize}
    Then, there exists a quantum algorithm that, with probability at least $1-\alpha$, outputs approximation functions $\hat{f}_1,\ldots,\hat{f}_N:\Omega_d\rightarrow\mathbb{R}_{\ge0}$ for $f_1,\ldots,f_N$ with the following properties: each $\hat{f}_i$ is written as $\hat{f}_i=\sum_{\vec{l}\in\Lambda_L} \hat{a}_{i,\vec{l}}P_{\vec{l}}$ with random coefficients $\{\hat{a}_{i,\vec{l}}\}_{\vec{l}\in\Lambda^\prime_L}$ and $\hat{a}_{i,\vec{0}}=2^{-d}$, and its MISE is bounded as
    \begin{equation}
        \mathbb{E}_{\rm Q}\left[\int_{\Omega_d} \left(\hat{f}_i(\vec{x})-f_i(\vec{x})\right)^2 d\vec{x}\right]\le\epsilon^2,
        \label{eq:MISE}
    \end{equation}
    where $\mathbb{E}_{\rm Q}[\cdot]$ denotes the expectation with respect to the randomness in the algorithm.
    In this algorithm, we use quantum circuits with $\mathcal{U}_{\rm all}$-depth of order 
    \begin{align}
        O&\left(\frac{\sqrt{N}}{\epsilon}L^d\log^{\frac{d}{2}}L \times \right. \nonumber \\
        &\quad\left.\log\log\left(\frac{\sqrt{N}}{\epsilon}L^d\log^{\frac{d}{2}}L\right) \log\left(\frac{N}{\epsilon}L^d\log^{\frac{d}{2}}L\right)\right)
        \label{eq:circuitDepth},
    \end{align}
    and the total number of queries to oracles in $\mathcal{U}_{\rm all}$ is
    \begin{align}
        O&\left(\frac{N^{3/2}}{\epsilon}L^{2d}\log^{\frac{d}{2}}L \times \right. \nonumber \\
        &\quad\left.\log\log\left(\frac{\sqrt{N}}{\epsilon}L^d\log^{\frac{d}{2}}L\right) \log\left(\frac{N}{\epsilon}L^d\log^{\frac{d}{2}}L\right)\right),
        \label{eq:queryComp}
    \end{align}
    where $\mathcal{U}_{\rm all}:=\{U_{p_i}\}_{i\in[N-1]_0} \cup \{U^{\rm SP}_f\}_f \cup \{U^{\rm amp}_{P_{\vec{l}}}\}_{\vec{l}\in\Lambda_L}$.

    \label{th:main}
\end{theorem}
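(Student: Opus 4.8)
The plan is to realize the loop from the Idea subsection as a chain of $N$ QMCI-based coefficient estimations and to control how the estimation error propagates across the time steps. I would first pin down the procedure of Algorithm~\ref{alg:main}: starting from the deterministic $\vec{x}_0$, at step $i$ I use the coefficients already obtained to build $U^{\rm SP}_{\hat f_{i-1}}$, apply $U_{p_{i-1}}$ to generate a state encoding the one-step-evolved density $g_i:=T_{i-1}\hat f_{i-1}$, where $T_{i-1}$ is the linear (Markov) map induced by $\tilde{p}_{i-1}$ of Eq.~\eqref{eq:Upi}, and for each $\vec l\in\Lambda^\prime_L$ run $\proc{UBQAE}$ on the composite oracle $U^{\rm amp}_{P_{\vec l}}\,U_{p_{i-1}}\,U^{\rm SP}_{\hat f_{i-1}}$ to estimate $\mathbb{E}[P_{\vec l}]$ through Eq.~\eqref{eq:UPl}, rescaling by $C(\vec l)$ to obtain $\hat a_{i,\vec l}$ and fixing $\hat a_{i,\vec 0}=2^{-d}$ so that Eq.~\eqref{eq:OSDEProb1} holds automatically. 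The per-call accuracy $\epsilon^\prime$ and bias $\delta^\prime$ handed to $\proc{UBQAE}$ are free parameters to be fixed at the very end.

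Next I would bound the MISE of Eq.~\eqref{eq:MISE} by expanding $\hat f_i-f_i$ in the Legendre basis and invoking the orthogonality relation~\eqref{eq:Orthddim}, which turns the integral into a weighted sum of squared coefficient errors plus a tail:
\[
\int_{\Omega_d}\bigl(\hat f_i-f_i\bigr)^2 d\vec x=\sum_{\vec l\in\Lambda_L}\tfrac{1}{C(\vec l)}\bigl(\hat a_{i,\vec l}-a_{f_i,\vec l}\bigr)^2+\sum_{\vec l\notin\Lambda_L}\tfrac{1}{C(\vec l)}a_{f_i,\vec l}^2 .
\]
After taking $\mathbb{E}_{\rm Q}$, the second sum is a deterministic truncation error that Theorem~\ref{th:LegExAccu} renders exponentially small once $L\gtrsim\log(1/\epsilon)$, while each term of the first sum splits, via $\mathbb{E}_{\rm Q}[(\hat a-a)^2]=(\mathbb{E}_{\rm Q}\hat a-a)^2+\mathrm{Var}(\hat a)$, into a squared bias and a variance. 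The proof then reduces to controlling the bias and the variance of the coefficient estimates along the recursion in $i$.

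The heart of the argument is a recursion for the error function $e_i:=\hat f_i-f_i$. Since preparing $\hat f_{i-1}$ and applying $U_{p_{i-1}}$ realizes $T_{i-1}$ and the subsequent estimation amounts to the orthogonal projection $\Pi_L$ onto $\mathrm{span}\{P_{\vec l}\}_{\vec l\in\Lambda_L}$ plus QAE noise, I obtain $e_i=\Pi_L T_{i-1} e_{i-1}+\tau_i+\eta_i$, with $\tau_i$ the step-$i$ truncation residual and $\eta_i$ the QAE noise. Splitting $e_i$ into its $\mathbb{E}_{\rm Q}$-mean (bias) and fluctuation, and using that $\hat a_{i,\vec 0}=2^{-d}$ forces every $e_i$ to integrate to zero, I would run two coupled recursions: the bias obeys $\|\mathbb{E}_{\rm Q}e_i\|\le\|\Pi_L T_{i-1}\|\,\|\mathbb{E}_{\rm Q}e_{i-1}\|+\|\tau_i\|+\|\mathbb{E}_{\rm Q}\eta_i\|$, and the variance, handled by the law of total variance conditioned on the history $\mathcal F_{i-1}$, obeys $\mathbb{E}_{\rm Q}\|e_i-\mathbb{E}_{\rm Q}e_i\|^2\le\|\Pi_L T_{i-1}\|^2\,\mathbb{E}_{\rm Q}\|e_{i-1}-\mathbb{E}_{\rm Q}e_{i-1}\|^2+(\text{per-step QAE variance})$. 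Using $\proc{UBQAE}$ (Theorem~\ref{th:UBQAE}) to push $\|\mathbb{E}_{\rm Q}\eta_i\|$ below a negligible level at only logarithmic cost keeps the accumulated bias $O(N\delta^\prime)$ harmless, so that—provided the propagation is non-expansive—the variance simply adds over steps, yielding the linear-in-$N$ accumulation and hence the $O(\sqrt N)$ typical error. I expect the \emph{main obstacle} to be precisely this non-expansiveness: a stochastic kernel is a contraction only in $L^1$, whereas the MISE lives in $L^2$, so $\sup_i\|\Pi_L T_{i-1}\|_{L^2\to L^2}\le 1$ is not automatic and must be secured by an assumption on the transition densities $p_i$ (absent which the factor $\|\Pi_L T_{i-1}\|^N$ could blow up exponentially). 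Treating the adaptive dependence of $\eta_i$ on $\hat f_{i-1}$ rigorously, and guaranteeing that every $\hat f_i$ stays nonnegative so that the oracle $U^{\rm SP}_{\hat f_i}$ of Eq.~\eqref{eq:USPf} is actually applicable—this last point consuming the failure budget $\alpha$ together with regularity assumptions on the $f_i$—are the accompanying delicate steps.

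Finally I would fix the parameters and count resources. A single $\hat a_{i,\vec l}$ carries variance $\lesssim C(\vec l)^2\epsilon^{\prime 2}$, so its MISE contribution is $\tfrac{1}{C(\vec l)}C(\vec l)^2\epsilon^{\prime 2}=C(\vec l)\epsilon^{\prime 2}$; summing over $\Lambda_L$ and over the $N$ steps and using $\sum_{\vec l\in\Lambda_L}C(\vec l)=O(L^{2d})$ shows the accumulated variance is $O(N L^{2d}\epsilon^{\prime 2})$, which I set to $O(\epsilon^2)$, forcing $\epsilon^\prime=\widetilde{O}(\epsilon/(\sqrt N L^{d}))$. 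Each $\proc{UBQAE}$ run then has $\mathcal{U}_{\rm all}$-depth $O(1/\epsilon^\prime)$ up to the $\log\log$ and $\log$ factors of Theorem~\ref{th:UBQAE}—the composite oracle has constant $\mathcal{U}_{\rm all}$-depth, and crucially $\hat f_{i-1}$ is re-prepared by the single oracle $U^{\rm SP}_{\hat f_{i-1}}$ rather than by $i-1$ evolution steps—giving the depth in Eq.~\eqref{eq:circuitDepth}; multiplying the per-run query count by the $|\Lambda^\prime_L|=O(L^d)$ coefficients and by the $N$ steps yields the total in Eq.~\eqref{eq:queryComp}. The residual $\log^{d/2}L$ factors I would track through a per-coefficient allocation of the accuracies $\epsilon^\prime_{\vec l}$ tuned to the weights $C(\vec l)$ and to the decay of the true coefficients (so that only $\widetilde{O}(1)$ multi-indices per dimension are non-negligible), together with a union bound distributing $\alpha$ over the $O(NL^d)$ individual estimations.
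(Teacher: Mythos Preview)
Your proposal is essentially correct and follows the same architecture as the paper: the same iterative algorithm, the same decomposition of the MISE into a truncation tail (handled by Theorem~\ref{th:LegExAccu}) and per-coefficient squared errors, the same recursive propagation of the error with bias and variance tracked separately using the guarantees of $\proc{UBQAE}$, the same positive-definiteness check absorbing the failure budget~$\alpha$, and the same resource count.

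The one genuine technical difference is how the non-expansiveness you flag is implemented. You phrase it as an $L^2\!\to\! L^2$ operator-norm assumption on $\Pi_L T_{i-1}$; the paper instead works coefficient by coefficient, unrolling the recursion into an explicit sum of products of the deterministic numbers $c_{i,\vec l,\vec l'}=C(\vec l)\int\!\!\int P_{\vec l'}(\vec x_i)p_i(\vec x_{i+1}|\vec x_i)P_{\vec l}(\vec x_{i+1})\,d\vec x_i\,d\vec x_{i+1}$ and assuming $\sum_{\vec l'\in\Lambda_L'}|c_{i,\vec l,\vec l'}|\le 1$ (the hidden assumption [AS6] in the full statement). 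This $\ell^1$ row-sum condition is what makes every propagated product of $c$'s bounded by~$1$ in absolute value, and it plays exactly the role of your operator-norm hypothesis, though the two conditions are not equivalent. A minor consequence of the paper's route is that propagation mixes the per-$\vec l$ variance: after $i$ steps one only gets the uniform bound $\mathbb{E}_{\rm Q}[(\delta a_{i,\vec l})^2]\lesssim i\,(L+\tfrac12)^{2d}\epsilon'^2$, and combining this with $\sum_{\vec l}1/C(\vec l)\le(\log(2L+1)+\tfrac12)^d$ is what produces the $\log^{d/2}L$ factors in Eqs.~\eqref{eq:circuitDepth}--\eqref{eq:queryComp}. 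So those factors arise from propagation mixing plus the harmonic-type sum, not from a per-coefficient accuracy allocation as you conjecture at the end.
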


\ \\

\begin{proof}

\ \\

\noindent \textbf{\underline{Algorithm}}

By combining $U_{p_0}$ and $U^{\rm amp}_{P_{\vec{l}}}$, we get the quantum circuit $O_{1,\vec{l}}$ that acts as
\begin{align}
    & O_{1,\vec{l}}\ket{\vec{x}_0}\ket{0}\ket{0}= \nonumber \\
    & \sum_{\vec{x}_1 \in \mathcal{X}_1} \sqrt{\tilde{p}_0(\vec{x}_1|\vec{x}_0)}  \ket{\vec{x}_0}\ket{\vec{x}_1} \otimes \nonumber \\
    &   \left(\sqrt{\frac{1+P_{\vec{l}}(\vec{x}_1)}{2}}\ket{1}+\sqrt{\frac{1-P_{\vec{l}}(\vec{x}_1)}{2}}\ket{0}\right).
\end{align}
Using this, $\proc{UBQAE}$ gives us an approximation $\hat{b}_{1,\vec{l}}$ of
\begin{equation}
    \tilde{b}_{1,\vec{l}}:= \sum_{\vec{x}_1 \in \mathcal{X}_1} \tilde{p}_0(\vec{x}_1|\vec{x}_0)\frac{1+P_{\vec{l}}\left(\vec{x}_1\right)}{2},
\end{equation}
and thus an approximation $\hat{a}_{1,\vec{l}}$ of
\begin{align}
    \tilde{a}_{1,\vec{l}}&:=(2\tilde{b}_{1,\vec{l}}-1)C(\vec{l}) \nonumber \\
    &=C(\vec{l})\times\sum_{\vec{x}_1 \in \mathcal{X}_1} \tilde{p}_0(\vec{x}_1|\vec{x}_0)P_{\vec{l}}\left(\vec{x}_1\right).
\end{align}
Similarly, for a function $f$ for which we have $U^{\rm SP}_{f}$, combining $U^{\rm SP}_{f}$, $U_{p_i}$, and $U^{\rm amp}_{P_{\vec{l}}}$ gives us the quantum circuit $O_{f,i+1,\vec{l}}$ that acts as
\begin{align}
    & O_{f,i+1,\vec{l}}\ket{0}\ket{0}\ket{0}= \nonumber \\
    & \ \sum_{\vec{x}_i \in \mathcal{X}_{f}} \sum_{\vec{x}_{i+1} \in \mathcal{X}_{i+1}} \sqrt{\tilde{f}(\vec{x}_i)\tilde{p}_i(\vec{x}_{i+1}|\vec{x}_i)}  \ket{\vec{x}_i}\ket{\vec{x}_{i+1}} \nonumber \\
    & \ \otimes \left(\sqrt{\frac{1+P_{\vec{l}}\left(\vec{x}_{i+1}\right)}{2}}\ket{1}+\sqrt{\frac{1-P_{\vec{l}}\left(\vec{x}_{i+1}\right)}{2}}\ket{0}\right),
\end{align}
and using this, $\proc{UBQAE}$ gives us an approximation $\hat{b}_{f,i+1,\vec{l}}$ of
\begin{equation}
    \tilde{b}_{f,i+1,\vec{l}}:=\sum_{\vec{x}_i \in \mathcal{X}_{f}} \sum_{\vec{x}_{i+1} \in \mathcal{X}_{i+1}} \tilde{f}(\vec{x}_i)\tilde{p}_i(\vec{x}_{i+1}|\vec{x}_i)\frac{1+P_{\vec{l}}\left(\vec{x}_{i+1}\right)}{2},
\end{equation}
and thus an approximation $\hat{a}_{f,i+1,\vec{l}}$ of
\begin{align}
    \tilde{a}_{f,i+1,\vec{l}}&:=(2\tilde{b}_{i,\vec{l}}-1)C(\vec{l}) \nonumber \\
    &=C(\vec{l})\times\sum_{\vec{x}_i \in \mathcal{X}_{f}} \sum_{\vec{x}_{i+1} \in \mathcal{X}_{i+1}} \tilde{f}(\vec{x}_i)\tilde{p}_i(\vec{x}_{i+1}|\vec{x}_i)P_{\vec{l}}\left(\vec{x}_{i+1}\right).
\end{align}
Based on these, we construct Algorithm \ref{alg:main} shown below.

\begin{algorithm}[H]
\caption{Get the orthogonal series approximation of the density function of $\vec{X}(t_N)$.}\label{alg:main}
\begin{algorithmic}[1]
\For{$\vec{l}\in\Lambda^\prime_L$}
\State Construct $O_{1,\vec{l}}$.
\State Run $\proc{UBQAE}\left(O_{1,\vec{l}},\epsilon^\prime,\delta^\prime\right)$, where
\begin{align}
    \epsilon^\prime&:=\frac{\epsilon}{4\sqrt{2N}\left(L+\frac{1}{2}\right)^d\left(\log(2L+1)+\frac{1}{2}\right)^{d/2}}, \label{eq:epsPr} \\
    \delta^\prime&:=\frac{\epsilon}{8\sqrt{2}N\left(L+\frac{1}{2}\right)^d\left(\log(2L+1)+\frac{1}{2}\right)^{d/2}}. \label{eq:delPr}
\end{align}
Let the output be $\hat{b}_{1,\vec{l}}$ and define $\hat{a}_{1,\vec{l}}:=(2\hat{b}_{1,\vec{l}}-1)C(\vec{l})$.
\EndFor
\State Define $\hat{f}_1 = \sum_{\vec{l}\in\Lambda_L} \hat{a}_{1,\vec{l}} P_{\vec{l}}$ with $\hat{a}_{1,\vec{0}}=2^{-d}$.
\For{$i=1,\cdots,N-1$}
\For{$\vec{l}\in\Lambda^\prime_L$}
\State Construct $O_{\hat{f}_{i},i+1,\vec{l}}$.
\State Run $\proc{UBQAE}\left(O_{\hat{f}_{i},i+1,\vec{l}},\epsilon^\prime,\delta^\prime\right)$. Let the output be $\hat{b}_{i+1,\vec{l}}$ and define $\hat{a}_{i+1,\vec{l}}:=(2\hat{b}_{i+1,\vec{l}}-1)C(\vec{l})$.
\EndFor
\State Define $\hat{f}_{i+1} = \sum_{\vec{l}\in\Lambda_L} \hat{a}_{i+1,\vec{l}} P_{\vec{l}}$ with $\hat{a}_{i+1,\vec{0}}=2^{-d}$.
\EndFor
\State \Return $\hat{f}_{N}$.
\end{algorithmic}
\end{algorithm}

The rest of the proof is left to Appendix \ref{app:proofMain}.

\end{proof}

To illustrate Algorithm \ref{alg:main} visually, we present a schematic diagram in Figure \ref{fig:OurAlgo}.

Extracting only the dependency on $\epsilon$ and $N$ except logarithmic factors from the bounds \eqref{eq:circuitDepth} and \eqref{eq:queryComp}, we get
\begin{equation}
    \widetilde{O}\left(\frac{\sqrt{N}}{\epsilon}\right)
    \label{eq:circuitDepthSimple}
\end{equation}
and
\begin{equation}
    \widetilde{O}\left(\frac{N^{3/2}}{\epsilon}\right),
    \label{eq:queryCompSimple}
\end{equation}
respectively.
These are the evaluations that have been preannounced in the introduction.

\setcounter{footnote}{3}
\begin{figure}[tp]
%\captionsetup{justification=raggedright}
\begin{center}
\includegraphics[scale=0.40]{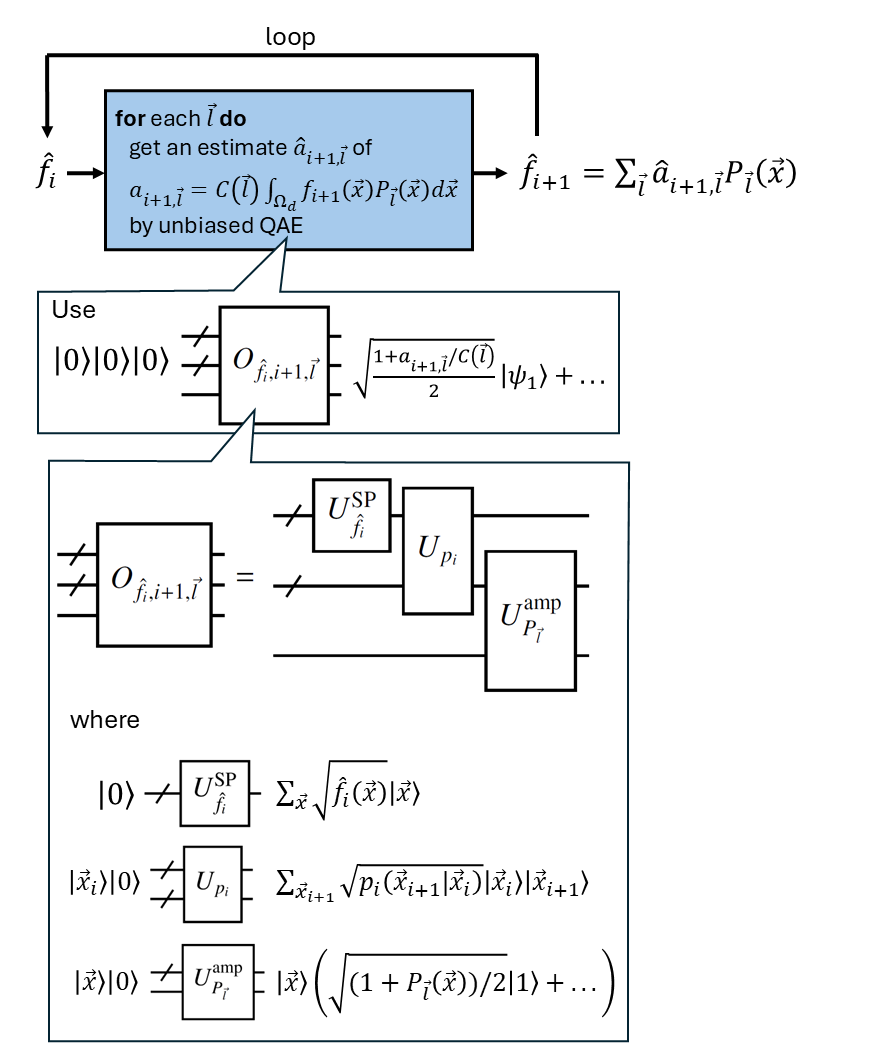}
\caption{Schematic diagram to outline Algorithm \ref{alg:main}. Given an approximating function $\hat{f}_i$ of the density $f_i$, we estimate the coefficients $a_{i+1,\vec{l}}$ by unbiased QAE and get the orthogonal series approximation $\hat{f}_{i+1}$ of the density $f_{i+1}$ at the next time step.
In this QAE, we use an oracle $O_{\hat{f}_{i},i+1,\vec{l}}$ that generates a quantum state encoding $a_{i+1,\vec{l}}$ in the amplitude, and $O_{\hat{f}_{i},i+1,\vec{l}}$ is constructed as a combination of oracles $U^{\rm SP}_{\hat{f}_i}$, $U_{p_i}$, and $U^{\rm amp}_{P_{\vec{l}}}$, which act as Eqs.~\eqref{eq:USPf}, \eqref{eq:Upi}, and \eqref{eq:UPl}, respectively\protect\footnotemark[4]. We iterate this step until we get $\hat{f}_{N}$.}
\label{fig:OurAlgo}
\end{center}
\end{figure}
\footnotetext{Here, for clarity, the resulting states shown in the figure are not exact; see Eqs.~\eqref{eq:USPf}, \eqref{eq:Upi}, and \eqref{eq:UPl} for the exact ones.}

% \section{Discussion \label{sec:disc}}

\section{Comparison of the proposed method with other methods \label{sec:compare}}

Now, let us compare the proposed method with other methods, considering the estimation of expectations at multiple time points: estimating $\mathbb{E}[g_1(\vec{X}(t_1))],\ldots,\mathbb{E}[g_N(\vec{X}(t_N))]$, where $g_1,\ldots,g_N:\Omega_d\rightarrow[0,1]$ are given functions.

First, we consider the naive way that we generate $\ket{f_i}$ by combining $U_{p_0},\ldots,U_{p_{i-1}}$ in series as Fig. \ref{fig:UvecX} and use it in QMCI.
With the accuracy $\epsilon$ required, for estimating the $i$th expectation $\mathbb{E}[g_i(\vec{X}(t_i))]$ in this way, we use a quantum circuit with $\mathcal{U}_p$-depth of order $\widetilde{O}(N/\epsilon)$ and the number of queries to oracles in $\mathcal{U}_p$ is of the same order in total.
For estimating all the expectations, the total query number piles up to $\widetilde{O}(N^2/\epsilon)$.

The proposed method improves this with respect to the scaling on $N$.
We obtain approximating functions $\hat{f}_1,\ldots,\hat{f}_N$ for the density functions $f_1,\ldots,f_N$, using quantum circuits with $\mathcal{U}_{\rm all}$-depth of order $\widetilde{O}(\sqrt{N}/\epsilon)$ and querying the circuits in $\mathcal{U}_{\rm all}$ $\widetilde{O}(N^{3/2}/\epsilon)$ times in total.
After that, we can generate the quantum state $\ket{\hat{f}_i}$ by using $U^{\rm SP}_{\hat{f}_i}$ only once.
If we use this instead of the quantum state $\ket{f_i}$ in QMCI for estimating $\mathbb{E}[g_i(\vec{X}(t_i))]$, the $U^{\rm SP}_{\hat{f}_i}$-depth and the number of queries to $U^{\rm SP}_{\hat{f}_i}$ are $\widetilde{O}(1/\epsilon)$, and the total query number in estimating all the expectations is $\widetilde{O}(N/\epsilon)$.
Throughout the entire process, obtaining $\hat{f}_1,\ldots,\hat{f}_N$ makes the dominant contribution in terms of both the circuit depth and query number, which are, as a consequence, $\widetilde{O}(\sqrt{N}/\epsilon)$ and $\widetilde{O}(N^{3/2}/\epsilon)$, respectively.

Note that there is a version of QMCI that reduces the circuit depth in compensation for the query number \cite{GiurgicaTiron2022lowdepthalgorithms}.
By the algorithm in \cite{GiurgicaTiron2022lowdepthalgorithms}, which is a version of maximum likelihood estimation-based QAE (MLQAE)~\cite{suzuki2020amplitude} , we can estimate $\mathbb{E}[g_i(\vec{X}(t_i))]$ with accuracy $\epsilon$, using quantum circuits with $U_{\vec{X}(t_i)}$-depth of order $\widetilde{O}(1/\epsilon^{1-\beta})$ and querying $U_{\vec{X}(t_i)}$ $\widetilde{O}(1/\epsilon^{1+\beta})$ times in total, where $\beta$ is an arbitrary number in $(0,1]$.
If $U_{\vec{X}(t_i)}$ is a sequence of $O(N)$ oracles in $\mathcal{U}_p$, the $\mathcal{U}_p$-depth and the number of queries to those oracles are $\widetilde{O}(N/\epsilon^{1-\beta})$ and $\widetilde{O}(N/\epsilon^{1+\beta})$, respectively.
This means that, if we set 
\begin{align}
\beta=\log(\sqrt{N})/\log(1/\epsilon),
\label{eq:betaLowDep}
\end{align}
which is in $(0,1]$ as long as $\epsilon<1/\sqrt{N}$, the circuit depth and query number are $\widetilde{O}(\sqrt{N}/\epsilon)$ and $\widetilde{O}(N^{3/2}/\epsilon)$, respectively.
This query number is for estimating a single expectation, and to estimate all the expectations, it piles up to $\widetilde{O}(N^{5/2}/\epsilon)$.

There exist also quantum algorithms that estimate multiple expectations simultaneously \cite{Huggins2022,Cornelissen2022}.
We can apply these algorithms to the current problem: these algorithms can estimate $\mathbb{E}[g_1(\vec{X}(t_1))],\ldots,\mathbb{E}[g_N(\vec{X}(t_N))]$ querying $U_{\vec{X}(t_N)}$, because $U_{\vec{X}(t_N)}$ generates the state that encodes the joint probability distribution of $\vec{X}(t_1),\ldots,\vec{X}(t_N)$ as Eq. \eqref{eq:UXFromUpi}.
In this way, the depth and query number with respect to $U_{\vec{X}(t_N)}$ are both $\widetilde{O}(\sqrt{N}/\epsilon)$, and thus those with respect to oracles in $\mathcal{U}_p$ are both $\widetilde{O}(N^{3/2}/\epsilon)$.

In the classical Monte Carlo method, as discussed in Sec.~\ref{sec:intro}, the total query number scales as $\widetilde{O}(N/\epsilon^2)$.
Here, we consider not queries to $\mathcal{U}_p$ but classical sampling trials from the transition probabilities $p_i$. 

The above discussion is summarized in TABLE \ref{tbl:sum}.
Existing methods can be comparable to the proposed method in terms of either circuit depth or query number, but not in both aspects simultaneously.

\begin{table*}[htbp]
  \centering
  \begin{tabular}{l|c|c}
    Method & Circuit depth & Query number \\ \hline
    Proposed method & $\widetilde{O}(\sqrt{N}/\epsilon)$ & $\widetilde{O}(N^{3/2}/\epsilon)$ \\
    Naive method & $\widetilde{O}(N/\epsilon)$ & $\widetilde{O}(N^2/\epsilon)$ \\
    Low-depth method \cite{GiurgicaTiron2022lowdepthalgorithms} & $\widetilde{O}(\sqrt{N}/\epsilon)$ & $\widetilde{O}(N^{5/2}/\epsilon)$ \\
    Simultaneous method \cite{Huggins2022,Cornelissen2022} & $\widetilde{O}(N^{3/2}/\epsilon)$ & $\widetilde{O}(N^{3/2}/\epsilon)$ \\
    Classical method & --- & $\widetilde{O}(N/\epsilon^2)$
  \end{tabular}
  \caption{Summary of the circuit depth and query number in estimating the expectations $\mathbb{E}[g_1(\vec{X}(t_1))],\ldots,\mathbb{E}[g_N(\vec{X}(t_N))]$ by various methods. Here, we focus on only the scaling on $N$ and $\epsilon$.}
  \label{tbl:sum}
\end{table*}

We should also note the dependency of the complexity of the proposed method on other parameters.
In particular, Eqs. \eqref{eq:circuitDepth} and \eqref{eq:queryComp} have the factors $L^d$ and $L^{2d}$, respectively, which is exponential with respect to the dimension $d$.
These exponential dependencies stem from choosing the tensorized Legendre polynomials as the basis functions for density estimation: the number of the basis functions increases exponentially with respect to $d$.
This may make the proposed method difficult to apply in high-dimensional situations.

\section{Numerical demonstration \label{sec:demo}}

We now conduct a demonstrative experiment to confirm the above theoretical findings numerically.

As an example of stochastic processes that match the current setting that $\vec{X}(t)$ is bounded, $\vec{X}(t)\in\Omega_d$, we take the one-dimensional two-sided reflected Brownian motion $X(t)$~\cite{veestraeten2004conditional,Ivanovs_2010,DAuria2011two}, which is used in fields such as queuing theory and mathematical finance.
It is described by the SDE
\begin{align}
dX(t)=\mu dt + \sigma dW(t)
\end{align}
with constant parameters $\mu\in\mathbb{R}$ and $\sigma\in\mathbb{R}_+$, the initial value $X(t_0)=x_0$, and the reflective boundary condition at $X=c,d$, which are now set to $c=1$ and $d=-1$.
Its transition probability density from $t=s,X(t)=x$ to $t=s^\prime,X(t)=x^\prime$ is analytically given as~\cite{veestraeten2004conditional}
\begin{widetext}
\begin{align}
& p_{\rm RBM}(x^\prime, s^\prime ; x, s) \nonumber \\
&= \sum_{n=-\infty}^{\infty} 
\left\{ 
\frac{1}{\sigma \sqrt{2\pi (s' - s)}}
\exp \left( \frac{2\mu n (c - d)}{\sigma^2} \right) 
\exp \left( \frac{-(x' + 2n(d - c) - x - \mu (s' - s))^2}{2\sigma^2 (s' - s)} \right)
\right\} \nonumber \\
&\quad + \sum_{n=-\infty}^{\infty} 
\left\{ 
\frac{1}{\sigma \sqrt{2\pi (s' - s)}}
\exp \left( \frac{-2\mu (nd - (n+1)c + x)}{\sigma^2} \right) 
\exp \left( \frac{-(2nd - 2(n+1)c + x + x' - \mu (s' - s))^2}{2\sigma^2 (s' - s)} \right)
\right\} \nonumber \\
&\quad - \frac{2\mu}{\sigma^2} 
\sum_{n=0}^{\infty} 
\left\{ 
\exp \left( \frac{2\mu (nd - (n+1)c + x')}{\sigma^2} \right) 
\left( 1 - \Phi \left( \frac{\mu (s' - s) + 2nd - 2(n+1)c + x + x'}{\sigma \sqrt{s' - s}} \right) \right)
\right\} \nonumber \\
&\quad + \frac{2\mu}{\sigma^2} 
\sum_{n=0}^{\infty} 
\left\{ 
\exp \left( \frac{2\mu (nc - (n+1)d + x')}{\sigma^2} \right) 
\Phi \left( \frac{\mu (s' - s) - 2(n+1)d + 2nc + x + x'}{\sigma \sqrt{s' - s}} \right) 
\right\},
\end{align}
\end{widetext}
where $\Phi$ is the cumulative distribution function of the standard normal distribution.
In practice, the infinite sum is truncated at some finite value of $n$, which we denote by $n=\pm n_c$.
Thanks to this formula, we can compare the results of our demonstration with exact values except for the truncation of the infinite sum, and thus evaluate how accurate our method is.
This is also a reason why we choose this stochastic process as a test case.

We run Algorithm \ref{alg:main} for this $X(t)$ to obtain its approximate density functions $\hat{f}_1,\ldots,\hat{f}_N$.
Then, to quantity the accuracy of our method, we estimate
\begin{align}
q_N\coloneqq{\rm Pr}(X(t_N)>x_0)=\int_{x_0}^1 p_{\rm RBM}(x, t_N ; x_0, t_0)dx,    
\end{align}
the probability that $X(t)$ exceeds the initial value at the terminal time, by $\hat{q}_N\coloneqq\int_{x_0}^1 \hat{f}_N(x)dx$, and compare it with the exact value.
Since the current quantum hardware cannot be used for QMCI, we replace it in Algorithm \ref{alg:main} with classical simulation.
That is, we compute
\begin{align}
&b_{\hat{f}_i,i+1,l} := 
\int_{-1}^1 dx_i \int_{-1}^1dx_{i+1} \hat{f}_i(x_i)p_{\rm RBM}(x_{i+1}, t_{i+1};x_i,t_i) \nonumber \\
& \qquad\qquad\qquad\qquad\qquad\qquad\qquad\times \frac{1+P_{l}(x_{i+1})}{2}
\end{align}
classically (concretely, \texttt{dblquad} function in SciPy~\cite{2020SciPy-NMeth}), run a classical simulation of QAE with $b_{\hat{f}_i,i+1,l}$ being the squared amplitude to be estimated, and let its output be $\hat{b}_{i+1,l}$.
Here, instead of the unbiased QAE $\proc{UBQAE}$ in \cite{Cornelissen2023}, we consider the random-depth QAE (RQAE) in \cite{lu2023random}, for which the classical simulation is easier.
As a modified version of MLQAE, RQAE uses not the fixed schedule of the increasing number of Grover operator applications but the randomly fluctuating one, and in \cite{lu2023random} bias reduction was demonstrated numerically despite the lack of a mathematical proof.
We run Algorithm \ref{alg:RQAE}, whose output obeys the same probability distribution as that of RQAE.
Here, $m_{i,j}$ corresponds to the random depth of the quantum circuit.
The total query number in one run of RQAE is counted as $\sum_{i=0}^{N_{\rm r}-1} \sum_{j=1}^R m_{i,j}$, and the maximum depth of quantum circuit used in it is $\max_{i,j} m_{i,j}$.

\begin{algorithm}[H]
\caption{Classical simulation of random-depth QAE\protect\footnotemark[5]}\label{alg:RQAE}
\begin{algorithmic}[1]
\Require $\epsilon$: estimation accuracy, $R$: number of shots in each round, $a$: squared amplitude to be estimated
\State Set $K_\epsilon=\left\lfloor \log_2\left(\frac{1}{\epsilon}\right) \right\rfloor$.
\State Set $N_{\rm r}=K_\epsilon+1$ if $2^{K_\epsilon}<\left\lceil\frac{1}{\epsilon}\right\rceil$ or $N_{\rm r}=K_\epsilon$ otherwise.
\State Set $a_0=a$.
\For{$j=1,\ldots,R$}
\State Generate a sample $n_{0,j}$ from the Bernoulli distribution with $p=a$.
\EndFor
\For{$i=1,\ldots,N_{\rm r}-1$}
    \For{$j=1,\ldots,R$}
        \State Sample an integer $m_{i,j}$ uniformly from $2^i,\ldots,\max\left\{2^{i+1},\left\lceil\frac{1}{\epsilon}\right\rceil\right\}$.
        \State Generate a sample $n_{i,j}$ from the Bernoulli distribution with $p=a_{i,j}\coloneqq\sin^2(m_{i,j}\theta)$, where $\theta=\arcsin \sqrt{a}$.
    \EndFor
\EndFor
\State Find and output $\hat{a}$ that minimizes $\prod_{i=0}^{N_{\rm r}-1} \prod_{j=1}^R \left(a_{i,j}(\hat{a})\right)^{n_{i,j}} (1-a_{i,j}(\hat{a}))^{1-n_{i,j}}$, where $a_{i,j}(\hat{a})=
\sin^2(m_{i,j} \times \arcsin \sqrt{\hat{a}})$ with $m_{0,j}=1$.
\end{algorithmic}
\end{algorithm}
\footnotetext{Here, the original procedure in \cite{lu2023random} is modified in the following points. First, in \cite{lu2023random}, the number $K$ of rounds is an input parameter, but here the accuracy $\epsilon$ is input and $K$ is determined by it. Second, in the last round, the upper bound of the range from which $m_{i,j}$ is drawn is not a power of 2 but $\lceil 1/\epsilon \rceil$, so that we do not make unnecessarily many oracle calls to achieve the accuracy $\epsilon$.}

We compare the results of our method with those of the low-depth method.
We run Algorithm \ref{alg:LowDepthQAE}, a classical simulation of the low-depth QAE (LQAE), inputting $q_N$ as the square amplitude $a$ to be estimated and let its output be the simulated output of the low-depth method for estimating $q_N$.
In Algorithm \ref{alg:LowDepthQAE}, $2m_k+1$ corresponds to the depth of the quantum circuit used in the $k$-th round.
Thus, the total query number in one run of the LQAE is $\sum_{k=1}^K R(2m_k+1)$, and the maximum circuit depth is $2m_K+1$.

\begin{algorithm}[H]
\caption{Classical simulation of low-depth QAE}\label{alg:LowDepthQAE}
\begin{algorithmic}[1]
\Require $\epsilon$: estimation accuracy, $\beta$: speed of the increase of Grover operator applications, $R$: number of shots in each round, $a$: squared amplitude to be estimated
\State Set $K=\left\lceil \max\left\{\epsilon^{-2\beta},\log(1/\epsilon)\right\} \right\rceil$.
\For{$k=1,\ldots,K$}
    \State Generate $R$ samples from the Bernoulli distribution with $p=a_{k}\coloneqq\sin^2((2m_{k}+1)\theta)$, where $m_k\coloneqq\left\lfloor k^{(1-\beta)/2\beta}\right\rfloor$ and $\theta\coloneqq\arcsin \sqrt{a}$, and let the number of 1's be $n_k$.
\EndFor
\State Find and output $\hat{a}$ that minimizes $\prod_{k=1}^{K} \left(a_{k}(\hat{a})\right)^{n_{k}} (1-a_{k}(\hat{a}))^{R-n_{k}}$, where $a_{k}(\hat{a})=\sin^2((2m_{k}+1) \times \arcsin \sqrt{\hat{a}})$.
\end{algorithmic}
\end{algorithm}

\begin{figure}[!htbp]
    \centering
\begin{center}
    \subfigure[RMSE]{
    \includegraphics[width=0.45\textwidth]{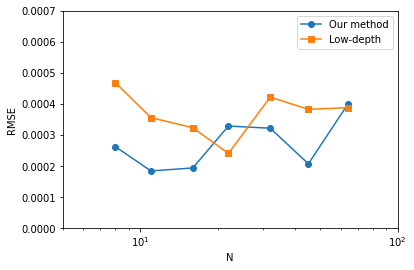} 	\label{fig:RMSE}
    }
    \subfigure[Total query number]{
    \includegraphics[width=0.45\textwidth]{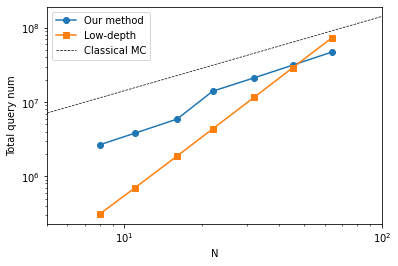} \label{fig:query}
    }
    \subfigure[Maximum circuit depth]{
    \includegraphics[width=0.45\textwidth]{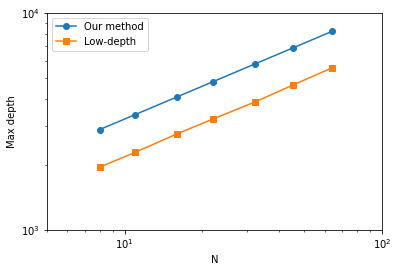} \label{fig:depth}
    }
\caption{The result of the numerical experiments to estimate $q_N$ for various $N$. The blue and orange lines correspond to our method and the low-depth method, respectively. (a) shows the RMSE of the estimates in 10 runs of each method. (b) and (c) show the total query number and the maximum depth of the circuits used in each method averaged over 10 runs, respectively. In (b), we also show the total query number in the classical Monte Carlo method to achieve the RMSE of 0.0004.}
\label{fig:demo}
\end{center}
\end{figure}

In the current demonstration, we do not consider the simultaneous method, which is not formulated as MLQAE and not as easy to simulate as RQAE and LQAE.
We can regard this demonstration as a simulation of QMCI with the circuit depth limited to a certain level, where our method and the low-depth method are natural candidates for the base method, but the simultaneous method is not.

We now show the result of the numerical experiment in FIG.~\ref{fig:demo}.
Here, we set the parameters as follows: $\mu=0.5,\sigma=1,x_0=0,n_c=5,t_0=0$, and $t_1,\ldots,t_N$ are $N$ equidistant points between $t_1=0.2$ and $t_N=0.6$.
We vary $N$ as $N=8,11,16,22,32,45,64$ and see how the result changes.
In our method, we set $R=12$ and $\epsilon=\epsilon^{\rm RQAE}_N\coloneqq2^{-10}/\sqrt{N}$ in Algorithm \ref{alg:RQAE} and the maximum polynomial degree in Legendre expansion to $L=5$.
In LQAE as Algorithm \ref{alg:LowDepthQAE}, we set $R=12$, $\epsilon=0.0029$, and $\beta$ by Eq.~\eqref{eq:betaLowDep}.
FIG.~\ref{fig:RMSE} shows the root-mean-square errors (RMSEs) of the estimates in 10 runs of our method and the low-depth method, which indicates that in the current setting, our method has a better or comparable accuracy compared to the low-depth method for every $N$.
FIG.~\ref{fig:query} shows the total query number in each method averaged over 10 runs.
It also shows, as a reference, the number of sampling trials from the transition probabilities in the classical Monte Carlo method with an RMSE of 0.0004, below which our method's RMSE is for every $N$.
From this figure, we see that, as expected, the low-depth method has a rapidly increasing query number of $\widetilde{O}(N^{5/2})$.
On the other hand, the query number in our method shows the milder increase, which is of $\widetilde{O}(N^{3/2})$\footnote{In fact, the total query number in our method non-continuously increases with respect to $N$, like a bump between $N=16$ and $22$ in FIG.~\ref{fig:query}. This is understood as follows. In Algorithm~\ref{alg:RQAE}, the considerable contribution to the total query number comes from the second-to-last round, $i=N_{\rm r}-2$, where the upper bound of the random depth $m_{i,j}$ is $2^{K_\epsilon}$. With $\epsilon$ set to $\epsilon^{\rm RQAE}_N$ in the current demonstration, $2^{K_\epsilon}$ non-continuously increases with respect to $N$, which leads to the non-continuous increase of the total query number. Nevertheless, the total query number scales as $\widetilde{O}(N^{3/2})$.}, and is smaller than that of the low-depth method for $N=64$.
FIG.~\ref{fig:depth} shows the maximum depth of the circuits used in each method averaged over 10 runs.
Both our method and the low-depth method show the scaling of $\widetilde{O}(\sqrt{N})$ as expected.

\section{Summary \label{sec:sum}}

In this paper, we have focused on the time evolution of a stochastic process $\vec{X}(t)$ in QMCI for estimating expectations concerning $\vec{X}(t)$.
Given the transition probability densities $p_0,\ldots,p_{N-1}$ between discrete time points $t_0,\ldots,t_N$, we can combine the quantum circuits $U_{p_0},\ldots,U_{p_{N-1}}$ for state preparation for $\{p_i\}_i$ to generate a state encoding the distribution of $\vec{X}(t_N)$.
However, this results in $O(N)$ circuit depth, and if we estimate expectations concerning $\vec{X}(t_1),\ldots,\vec{X}(t_N)$, the total number of queries to $U_{p_i}$ scales on $N$ as $O(N^2)$.
We have aimed to improve this.
We have proposed a method to divide the time evolution quantum circuits based on OSDE.
In this method, we estimate the coefficients in the orthogonal series approximation of the density $f_i$ of $X(t_i)$ by unbiased QAE-based QMCI and get the approximating function $\hat{f}_i$ of $f_i$, and this $\hat{f}_i$ is used in estimating the next density $f_{i+1}$.
We iterate this step and get all the approximations $\hat{f}_1,\ldots,\hat{f}_N$, which can be used to estimate expectations concerning $\vec{X}(t_1),\ldots,\vec{X}(t_N)$.
In this approach, the dependencies of the circuit depth and the total query number on $N$ are reduced to $O(\sqrt{N})$ and $O(N^{3/2})$, respectively.
We have also compared our method with other versions of QMCI that aim for depth reduction or estimating multiple expectations, as summarized in Table \ref{tbl:sum}.
Besides, we have conducted a numerical demonstration of our method on the simulation of the reflected Brownian motion and seen that our method can have an advantage in terms of the query number compared to the low-depth method.

We should note that the current setting does not necessarily match some important problems.
In particular, as described in Appendix \ref{app:proofMain}, we make assumptions of the boundedness of $\vec{X}(t)$ and the positive lower bound of the density in order for the estimated density to be bona fide, and this does not match cases that $\vec{X}(t)$ is unbounded as is common in derivative pricing.
We will try to extend our algorithm so that it can be applied to more general settings in future works.

Although in this paper we consider a problem setting with derivative pricing as an application example in mind, it is also interesting to explore further applications beyond derivative pricing, using the currently proposed method as a base technology.
A possible target is the McKean-Vlasov process~\cite{McKean1966} described by the SDE
\begin{align}
    d\vec{X}(t)=\vec{\mu}\left(t,\vec{X}(t),f_{\vec{X}(t)}\right)dt+\Sigma\left(t,\vec{X}(t),f_{\vec{X}(t)}\right)d\vec{W}(t)
\end{align}
with drift and diffusion coefficients depending on $\vec{X}(t)$'s distribution $f_{\vec{X}(t)}$ in the form of expectation, which often appears not only in finance~\cite{guyon2013nonlinear} but also in fields of physics such as fluid dynamics~\cite{bossy1997stochastic} and interacting particle systems~\cite{Meleard1996}.
Solving this SDE involves estimating $\vec{\mu}$ and $\Sigma$ as expectations at every time step, to which we expect the currently proposed method can be applied.
Such an application will be explored in future works.

\section*{Acknowledgements}

The author is supported by MEXT Quantum Leap Flagship Program (MEXT Q-LEAP) Grant no. JPMXS0120319794, JSPS KAKENHI Grant no. JP22K11924, and JST COI-NEXT Program Grant No. JPMJPF2014.

\appendix

\section{Exact statement of Theorem \ref{th:LegExAccu} \label{app:LegExAccu}}

\begin{definition}
    For $s,a\in\mathbb{R}_+$, $N_{s,a}\subset\mathbb{C}$ denotes the open region bounded by the ellipse with foci 0 and $s$, and leftmost point $-a$. 
\end{definition}

\begin{definition}
    For $h\in\mathbb{R}_+$, we define
    \begin{equation}
        D_{h} := \left\{(z_1,\cdots,z_d)\in\mathbb{C}^d \ \middle| \ \sum_{i=1}^d z_i^2 \in N_{d,h^2}\right\}.
    \end{equation}
\end{definition}

\begin{theorem1restate}
Let $f:\Omega_d\rightarrow\mathbb{R}$ be an analytic function and suppose that there exists $h\in\mathbb{R}_+$ such that $f$ has an analytic extension to $D_h$.
Then, there exists $K\in\mathbb{R}_+$ and $\rho\in\mathbb{R}_{>1}$ such that, for any integer $L$ satisfying $L>\frac{d}{2\log \rho}$,
\begin{equation}
    \max_{\vec{x}\in\Omega_d} \left|\mathcal{P}_L[f](\vec{x})-f(\vec{x})\right| \le K\rho^{-L}
    \label{eq:LegExAccu}
\end{equation}
holds.
\end{theorem1restate}

\section{More on Theorem \ref{th:main} \label{app:proofMain}}

\subsection{Exact statement and proof of Theorem \ref{th:main}}

\begin{theoremmainrestate}
    Let $\alpha,\delta,\epsilon\in(0,1)$.
    Let $\{\vec{X}(t_{i})\}_{i\in[N]_0}$ be a $\Omega_d$-valued stochastic process with the deterministic initial value $\vec{x}_0$ and the conditional transition probability density $p_i(\cdot|\cdot)$.
    Suppose that the following assumptions on the access to oracles hold:

    \begin{enumerate}
    \renewcommand{\labelenumi}{[AS\arabic{enumi}]}
        \item For each $i\in[N]$, $\vec{X}(t_i)$ has the density function $f_i$ that satisfies the condition in Theorem \ref{th:LegExAccu} with common $h\in\mathbb{R}_+$, Eq. \eqref{eq:LegExAccu} with common $K\in\mathbb{R}_+$ and $\rho\in\mathbb{R}_{>1}$, and
        \begin{equation}
            \min_{\vec{x}\in\Omega_d} f_i(\vec{x}_i) \ge f_{\rm min} := \frac{2(L+1)^{3d/2}\epsilon\sqrt{N}}{\sqrt{\alpha}\left(\log(2L+1)+\frac{1}{2}\right)^{d/2}},
            \label{eq:fmin}
        \end{equation}
        where
        \begin{equation}
            L:=\frac{1}{\log \rho}\log\left(\frac{2^{(d+2)/2}K}{\epsilon}\right).
            \label{eq:L}
        \end{equation}

        \item 
        For each $i\in[N-1]_0$, we have access to the oracle $U_{p_i}$ acting as Eq. \eqref{eq:Upi}.
        Here, $\mathcal{X}_1,\ldots,\mathcal{X}_N\subset\Omega_d$ are finite sets, $\mathcal{X}_0:=\{\vec{x}_0\}$, and $\tilde{p}_i:\mathcal{X}_{i+1}\times\mathcal{X}_i\rightarrow[0,1],i\in[N-1]_0$ satisfies $\sum_{\vec{x}_{i+1}\in\mathcal{X}_{i+1}}\tilde{p}_i(\vec{x}_{i+1}|\vec{x}_i)=1$ for any $\vec{x}_i\in\mathcal{X}_i$.

        \item  For any $\vec{l}\in\Lambda_L$,
        \begin{align}
            &\left|\int_{\Omega_d}  f_1(\vec{x}_{1})P_{\vec{l}}(\vec{x}_{1}) d\vec{x}_1 -\sum_{\vec{x}_{1} \in \mathcal{X}_1} \tilde{p}_0(\vec{x}_{1}|\vec{x}_0)P_{\vec{l}}(\vec{x}_{1})\right| \nonumber \\
            \le & \frac{\epsilon}{4\sqrt{2}N\left(\log(2L+1)+\frac{1}{2}\right)^{d/2}}
            \label{eq:p0FiniteApp}
        \end{align}
        holds.

        \item For $f:\Omega_d\rightarrow\mathbb{R}_{\ge 0}$ written as $f=\sum_{\vec{l}\in\Lambda_L} a_{\vec{l}} P_{\vec{l}}$ with any real number set $\{a_{\vec{l}}\}_{\vec{l}\in\Lambda_L}$ including $a_{\vec{0}}=2^{-d}$, we have access to the oracle $U^{\rm SP}_{f}$ that acts as Eq.~\eqref{eq:USPf}.
        Here, $\mathcal{X}_{f} \subset \Omega_d$ is a finite set, and $\tilde{f}:\mathcal{X}_{f}\rightarrow[0,1]$ is a map such that $\sum_{\vec{x}\in\mathcal{X}_{f}}\tilde{f}(\vec{x})=1$.
        Besides, $\mathcal{X}_{f}$ and $\tilde{f}$, along with $\mathcal{X}_i$ and $\tilde{p}_i$ in (A2), satisfy
        \begin{align}
            &\left|\int_{\Omega_d} d\vec{x}_i \int_{\Omega_d} d\vec{x}_{i+1} f(\vec{x}_i)p_i(\vec{x}_{i+1}|\vec{x}_i)P_{\vec{l}}(\vec{x}_{i+1})-\right. \nonumber \\
            & \quad \left.\sum_{\vec{x}_i \in \mathcal{X}_{f}}\sum_{\vec{x}_{i+1} \in \mathcal{X}_{i+1}} \tilde{f}(\vec{x}_i)\tilde{p}_i(\vec{x}_{i+1}|\vec{x}_i)P_{\vec{l}}(\vec{x}_{i+1})\right| \nonumber \\
            \le & \frac{\epsilon}{4\sqrt{2}N\left(\log(2L+1)+\frac{1}{2}\right)^{d/2}}
            \label{eq:piFiniteApp}
        \end{align}
        for any $\vec{l}\in\Lambda_L$ and $i\in[N-1]$.

        \item For any $\vec{l}\in\mathbb{N}_0^d$, we have access to the oracle $U^{\rm amp}_{P_{\vec{l}}}$ that acts as Eq.~\eqref{eq:UPl} for any $\vec{x}\in\mathbb{R}^d$.

        \item For any $i\in[N-1]_0$ and  $\vec{l}\in\Lambda_L$,
        \begin{equation}
            \sum_{\vec{l}^\prime\in\Lambda_L^\prime}|c_{i,\vec{l},\vec{l}^\prime}|\le1,
            \label{eq:csum}
        \end{equation}
        holds, where
        \begin{equation}
        c_{i,\vec{l},\vec{l}^\prime}:=C(\vec{l})\int_{\Omega_d} d\vec{x}_i \int_{\Omega_d} d\vec{x}_{i+1} P_{\vec{l}^\prime}(\vec{x}_{i})p_i(\vec{x}_{i+1}|\vec{x}_i)P_{\vec{l}}(\vec{x}_{i+1}).
        \end{equation}

    \end{enumerate}
    Then, there exists a quantum algorithm that, with probability at least $1-\alpha$, outputs approximation functions $\hat{f}_1,\ldots,\hat{f}_N:\Omega_d\rightarrow\mathbb{R}_{\ge0}$ for $f_1,\ldots,f_N$ with the following properties: each $\hat{f}_i$ is written as $\hat{f}_i=\sum_{\vec{l}\in\Lambda_L} \hat{a}_{i,\vec{l}}P_{\vec{l}}$ with random coefficients $\{\hat{a}_{i,\vec{l}}\}_{\vec{l}\in\Lambda^\prime_L}$ and $\hat{a}_{i,\vec{0}}=2^{-d}$, and its MISE is bounded as Eq.~\eqref{eq:MISE}.
    In this algorithm, we use quantum circuits with $\mathcal{U}_{\rm all}$-depth of order \eqref{eq:circuitDepth}, and the total number of queries to oracles in $\mathcal{U}_{\rm all}$ is of order \eqref{eq:queryComp},  where $\mathcal{U}_{\rm all}:=\{U_{p_i}\}_{i\in[N-1]_0} \cup \{U^{\rm SP}_f\}_f \cup \{U^{\rm amp}_{P_{\vec{l}}}\}_{\vec{l}\in\Lambda_L}$.

\end{theoremmainrestate}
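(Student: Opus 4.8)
The plan is to analyze the output of Algorithm~\ref{alg:main} in three stages: the MISE bound \eqref{eq:MISE}, the guarantee (holding with probability $1-\alpha$) that each $\hat f_i$ is bona fide, and the depth/query counts \eqref{eq:circuitDepth}--\eqref{eq:queryComp}. I would begin from the orthogonality relation \eqref{eq:Orthddim}, which splits the MISE of $\hat f_i=\sum_{\vec l\in\Lambda_L}\hat a_{i,\vec l}P_{\vec l}$ into a truncation part $\sum_{\vec l\notin\Lambda_L}C(\vec l)^{-1}a_{f_i,\vec l}^2$ and an estimation part $M_i:=\sum_{\vec l\in\Lambda_L}C(\vec l)^{-1}(\hat a_{i,\vec l}-a_{f_i,\vec l})^2$. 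The truncation part is deterministic: the uniform bound of Theorem~\ref{th:LegExAccu} gives $\|\mathcal P_L[f_i]-f_i\|_\infty\le K\rho^{-L}$, hence an $L^2$ bound of $2^dK^2\rho^{-2L}$, and the choice of $L$ in \eqref{eq:L} forces this to be at most $\epsilon^2/4$. It then remains to show $\mathbb E_{\rm Q}[M_i]\le\tfrac34\epsilon^2$.

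For the estimation part I would set up an error recursion. Writing $e_{i,\vec l}:=\hat a_{i,\vec l}-a_{f_i,\vec l}$, I insert two intermediate quantities between $\hat a_{i,\vec l}$ and $a_{f_i,\vec l}$: the noise-free target $\tilde a_{\hat f_{i-1},i,\vec l}$ that \proc{UBQAE} estimates, and the continuous integral $\sum_{\vec l'\in\Lambda_L}\hat a_{i-1,\vec l'}c_{i-1,\vec l,\vec l'}$. The difference between the first two is the \proc{UBQAE} error $\xi_{i,\vec l}=2C(\vec l)(\hat b_{i,\vec l}-\tilde b_{i,\vec l})$, whose conditional bias is at most $2C(\vec l)\delta'$ and conditional second moment at most $4C(\vec l)^2(\epsilon'^2+\delta')$ by Theorem~\ref{th:UBQAE}; the difference between the last two is the discretization error controlled by [AS4] (and by [AS3] for the $i=1$ base case). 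Using $a_{f_i,\vec l}=\sum_{\vec l'}a_{f_{i-1},\vec l'}c_{i-1,\vec l,\vec l'}$ over the full index set, this yields $\vec e_i=M_{i-1}\vec e_{i-1}+\vec\tau_i+\vec\eta_i+\vec\xi_i$, where $M_{i-1}=(c_{i-1,\vec l,\vec l'})$, $\vec\eta_i$ is the discretization term, and $\vec\tau_i$ collects the coefficients of $f_{i-1}$ outside $\Lambda_L$ fed through $M_{i-1}$.

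The crux --- and the step I expect to be hardest --- is bounding $\mathbb E_{\rm Q}[M_i]$ so that the per-step errors accumulate only linearly in $N$ rather than quadratically. Two design features are essential. First, the $\ell^1$ contraction [AS6], $\sum_{\vec l'}|c_{i,\vec l,\vec l'}|\le1$, prevents $M_{i-1}$ from amplifying errors, so the propagation factor is at most $1$ in the relevant norm. Second, the bias suppression of Theorem~\ref{th:UBQAE}: conditioning on the history $\mathcal F_{i-1}$ and applying the tower rule, the zero-mean fluctuation of $\vec\xi_i$ is independent of earlier fluctuations, so the variances add across the $N$ steps instead of compounding, while the small biases contribute only a lower-order deterministic drift. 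I would therefore separate $e_{i,\vec l}$ into a deterministic/bias part and a martingale fluctuation, bound each recursively, and resum against the $C(\vec l)^{-1}$ weights. The choices \eqref{eq:epsPr} and \eqref{eq:delPr}, whose $\sqrt N$ and $N$ denominators make $\epsilon'^2=O(\epsilon^2/N)$ and $\delta'=O(\epsilon/N)$, are calibrated so that $N$ times the per-step contribution --- with the $\sum_{\vec l}C(\vec l)$-type factors absorbed by the $(\log(2L+1)+\tfrac12)^{d/2}$ normalization --- stays below $\tfrac34\epsilon^2$.

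For the bona fide property I would pass from MISE to a uniform bound: since $f_i\ge f_{\rm min}$ by [AS1], it suffices that $\sup_{\vec x}|\hat f_i(\vec x)-f_i(\vec x)|<f_{\rm min}$, and using $|P_{\vec l}|\le1$ this is controlled by the coefficient error $\sum_{\vec l\in\Lambda^\prime_L}|e_{i,\vec l}|$ plus the truncation term. Bounding the expected squared coefficient error brings in $\sum_{\vec l}C(\vec l)^2=O((L+1)^{3d})$, which is the source of the $(L+1)^{3d/2}$ factor in \eqref{eq:fmin}; a Markov inequality on this quantity with per-step failure probability $\alpha/N$, followed by a union bound over the $N$ steps, yields overall success probability $1-\alpha$, and the stated form of $f_{\rm min}$ (with its $\sqrt N/\sqrt\alpha$ dependence) is precisely what makes this inequality close. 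Finally, for complexity I would count $|\Lambda^\prime_L|=O(L^d)$ coefficients per step and $N$ steps, each requiring one \proc{UBQAE} run of query cost $O(\epsilon'^{-1}\log\log(\epsilon'^{-1})\log((\delta'\epsilon')^{-1}))$ by Theorem~\ref{th:UBQAE}; since each oracle $O_{\hat f_i,i+1,\vec l}$ has $\mathcal U_{\rm all}$-depth $O(1)$, one run has depth $O(1/\epsilon')$, giving \eqref{eq:circuitDepth}, and multiplying the per-run cost by $N\,|\Lambda^\prime_L|$ and substituting \eqref{eq:epsPr}, \eqref{eq:delPr} gives the total query bound \eqref{eq:queryComp}.
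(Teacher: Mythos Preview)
Your plan matches the paper's proof closely: the same coefficient-level error recursion $\hat a_{i,\vec l}-a_{f_i,\vec l}$ propagated through the transfer numbers $c_{i,\vec l,\vec l'}$, the same use of [AS6] as an $\ell^1$-contraction so propagation does not amplify, the same reliance on the bias suppression of \proc{UBQAE} so that the fluctuation variances add across the $N$ steps rather than compound, and the same oracle counting for \eqref{eq:circuitDepth}--\eqref{eq:queryComp}. The only noteworthy deviations are that you split the MISE exactly via Parseval whereas the paper applies Young's inequality to $(\hat f_i-\mathcal P_L[f_i])+(\mathcal P_L[f_i]-f_i)$ and loses a harmless factor~$2$; you correctly include a truncation-propagation term $\vec\tau_i$ (the coefficients $a_{f_{i-1},\vec l'}$ with $\vec l'\notin\Lambda_L$ pushed through the transition) that the paper silently drops when it writes $\delta a^{\rm acc}_{i+1,\vec l}=\sum_{\vec l'\in\Lambda_L'}c_{i,\vec l,\vec l'}\,\delta a_{i,\vec l'}$; and for positive definiteness the paper applies Chebyshev coefficient-by-coefficient and union-bounds over all pairs $(i,\vec l)$ rather than your Markov-per-step then union over~$i$, but both routes hit the same threshold \eqref{eq:fmin}.
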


\begin{proof}
The algorithm is as shown in Sec.~\ref{sec:main}.
The rest of the proof is as follows.
\ \\

\noindent \textbf{\underline{Accuracy}}

We define
\begin{align}
    \delta f_i & := \hat{f}_i - \mathcal{P}_L[f_i] := \sum_{\vec{l}\in\Lambda_L^\prime} \delta a_{i,\vec{l}} P_{\vec{l}},
\end{align}
where
\begin{align}
    \delta a_{i,\vec{l}} := \hat{a}_{i,\vec{l}} - a_{f_i,\vec{l}}.
\end{align}

Now, we show that for any $i\in[N]$, we have a decomposition that
\begin{align}
&\delta a_{i,\vec{l}}=\delta a^{\rm QAE}_{i,\vec{l}}+\delta a^{\rm disc}_{i,\vec{l}}+ \nonumber \\
&\sum_{j=1}^{i-1} \sum_{\vec{l}_{i-1}\in\Lambda_L^\prime} \cdots 
\sum_{\vec{l}_{j}\in\Lambda_L^\prime} c_{i-1,\vec{l},\vec{l}_{i-1}} c_{i-2,\vec{l}_{i-1},\vec{l}_{i-2}} \cdots c_{j,\vec{l}_{j+1},\vec{l}_j} \nonumber \\
&\qquad\qquad\qquad\qquad\quad\times\left(\delta a^{\rm QAE}_{j,\vec{l}_j}+\delta a^{\rm disc}_{j,\vec{l}_j}\right)
\label{eq:deltaaDecomp}
\end{align}
with
\begin{align}
    \delta a^{\rm QAE}_{i,\vec{l}} & :=
    \begin{cases}
    \hat{a}_{1,\vec{l}} - \tilde{a}_{1,\vec{l}} & ; \ i=1 \\
    \hat{a}_{i,\vec{l}} - \tilde{a}_{\hat{f}_{i-1},i,\vec{l}} & ; \ i \ge 2
    \end{cases}
    ,
    \\
    \delta a^{\rm disc}_{i,\vec{l}} & :=
    \begin{cases}
    \tilde{a}_{1,\vec{l}} - a_{f_1,\vec{l}} & ; \ i=1 \\
    \tilde{a}_{\hat{f}_{i-1},i,\vec{l}} - a_{\hat{f}_{i-1},i,\vec{l}}  & ; \ i \ge 2
    \end{cases}
    .
\end{align}
For $i=1$, this holds trivially.
Suppose that this holds for $i$.
We have
\begin{equation}
    \delta a_{i+1,\vec{l}}=\delta a^{\rm QAE}_{i+1,\vec{l}}+\delta a^{\rm disc}_{i+1,\vec{l}}+\delta a^{\rm acc}_{i+1,\vec{l}},
\end{equation}
where $\delta a^{\rm acc}_{i+1,\vec{l}}$ is defined as $\delta a^{\rm acc}_{i+1,\vec{l}} := a_{\hat{f}_i,i+1,\vec{l}} - a_{f_{i+1},\vec{l}}$ with
\begin{align}
    & a_{\hat{f}_i,i+1,\vec{l}} := \nonumber \\
    & \ C(\vec{l}) \times \int_{\Omega_d} d\vec{x}_i \int_{\Omega_d} d\vec{x}_{i+1} \hat{f}_i(\vec{x}_i)p_i(\vec{x}_{i+1}|\vec{x}_i)P_{\vec{l}}(\vec{x}_{i+1})
\end{align}
and
\begin{align}
    & a_{f_{i+1},\vec{l}} = \nonumber \\
    & \ C(\vec{l}) \times \int_{\Omega_d} d\vec{x}_i \int_{\Omega_d} d\vec{x}_{i+1} f_i(\vec{x}_i)p_i(\vec{x}_{i+1}|\vec{x}_i)P_{\vec{l}}(\vec{x}_{i+1})
\end{align}
and thus written as
\begin{align}
    \delta a^{\rm acc}_{i+1,\vec{l}} = \sum_{\vec{l}_i\in\Lambda_L^\prime}c_{i,\vec{l},\vec{l}_i}\delta a_{i,\vec{l}_i}.
\end{align}
By plugging Eq. \eqref{eq:deltaaDecomp} for $i$ into this, we obtain Eq. \eqref{eq:deltaaDecomp} for $i+1$.

Then, let us evaluate the variance of $\delta a_{i,\vec{l}}$.
To do so, we first see that due to the assumption [AS6],
\begin{align}
    & \left|\sum_{\vec{l}_{i-1}\in\Lambda_L^\prime} \cdots 
\sum_{\vec{l}_{j}\in\Lambda_L^\prime} c_{i-1,\vec{l},\vec{l}_{i-1}} c_{i-2,\vec{l}_{i-1},\vec{l}_{i-2}} \cdots c_{j,\vec{l}_{j+1},\vec{l}_j}\right| \nonumber \\
\le & \sum_{\vec{l}_{i-1}\in\Lambda_L^\prime} \cdots 
\sum_{\vec{l}_{j}\in\Lambda_L^\prime} \left|c_{i-1,\vec{l},\vec{l}_{i-1}}\right|\cdot\left|c_{i-2,\vec{l}_{i-1},\vec{l}_{i-2}}\right|\cdot...\cdot\left|c_{j,\vec{l}_{j+1},\vec{l}_j}\right| \nonumber \\
\le & 1
\end{align}
holds.
Besides,
\begin{align}
    &\left|\mathbb{E}_{\rm Q}\left[\delta a^{\rm QAE}_{i,\vec{l}}\right]\right|\le 2C(\vec{l})\delta^\prime, \\
    &\mathbb{E}_{\rm Q}\left[\left(\delta a^{\rm QAE}_{i,\vec{l}}\right)^2\right] \le \left(2C(\vec{l})\epsilon^{\prime}\right)^2
\end{align}
and
\begin{equation}
    |\delta a^{\rm disc}_{i,\vec{l}}| \le 2C(\vec{l})\delta^{\prime}
\end{equation}
follow from Theorem \ref{th:UBQAE} and the assumption [AS4]. 
Using these and noting that $\left\{\delta a^{\rm QAE}_{i,\vec{l}}\right\}_{i,\vec{l}}$ are mutually independent, we have
\begin{align}
    \left|\mathbb{E}_{\rm Q}\left[\delta a_{i,\vec{l}} \right]\right|&\le i \times \left(2C(\vec{l})\delta^{\prime} + 2C(\vec{l})\delta^{\prime}\right) \nonumber \\
    &\le \frac{\epsilon}{2\sqrt{2}\left(\log(2L+1)+\frac{1}{2}\right)^{d/2}} 
    \label{eq:coefEV}
\end{align}
and
\begin{align}
    &\mathbb{E}_{\rm Q}\left[\left(\delta a_{i,\vec{l}} \right)^2\right] \nonumber \\
    \le & i \times \left(2C(\vec{l})\epsilon^{\prime}\right)^2 + i^2 \times \left(2C(\vec{l})\delta^{\prime}\right)^2 + \nonumber \\
    & 2i^2 \times 2C(\vec{l})\delta^{\prime} \times 2C(\vec{l})\delta^{\prime} + i^2 \times \left(2C(\vec{l})\delta^{\prime}\right)^2 \nonumber \\
    \le & \frac{\epsilon^2}{4\left(\log(2L+1)+\frac{1}{2}\right)^d}.
    \label{eq:coefVar}
\end{align}

Finally, let us evaluate the MISE of $\hat{f}_i$.
Young's inequality gives us
\begin{align}
    & \mathbb{E}_{\rm Q}\left[\int_{\Omega_d} \left(\hat{f}_i(\vec{x})-f_i(\vec{x})\right)^2 d\vec{x}\right] \nonumber \\
    \le & 2\mathbb{E}_{\rm Q}\left[\int_{\Omega_d} \left(\delta f_i(\vec{x})\right)^2 d\vec{x}\right] + 2 \int_{\Omega_d} \left(\mathcal{P}_L[f_i](\vec{x})-f_i(\vec{x})\right)^2 d\vec{x}.
    \label{eq:MISEYoung}
\end{align}
We have
\begin{align}
    \mathbb{E}_{\rm Q}\left[\int_{\Omega_d} \left(\delta f_i(\vec{x})\right)^2 d\vec{x}\right] \le \sum_{\vec{l}\in\Lambda_L^\prime} \frac{1}{C(\vec{l})}\times \mathbb{E}_{\rm Q}\left[\left(\delta a_{i,\vec{l}} \right)^2\right] \le \frac{\epsilon^2}{4},
    \label{eq:MISE1stTerm}
\end{align}
where we use Eqs. \eqref{eq:Orthddim} and \eqref{eq:coefVar} along with
\begin{equation}
    \sum_{\vec{l}\in\Lambda_L^\prime} \frac{1}{C(\vec{l})}\le \left(\log(2L+1)+\frac{1}{2}\right)^d,
\end{equation}
which can be checked by elementary calculus.
Under the definition of $L$ in Eq. \eqref{eq:L}, Theorem \ref{th:LegExAccu} implies
\begin{equation}
    \max_{\vec{x}\in\Omega_d} \left|\mathcal{P}_L[f](\vec{x})-f(\vec{x})\right| \le \frac{\epsilon}{2^{(d+2)/2}}.
    \label{eq:MISE2ndTerm}
\end{equation}
Plugging Eqs. \eqref{eq:MISE1stTerm} and \eqref{eq:MISE2ndTerm} into Eq. \eqref{eq:MISEYoung} leads to Eq. \eqref{eq:MISE}.

\ \\

\noindent \textbf{\underline{Positive definiteness}}

Here, we show that, with probability at least $1-\alpha$, $\hat{f}_1,\ldots,\hat{f}_N$ output by Algorithm \ref{alg:main} are positive definite.
We have
\begin{align}
    & {\rm Pr}\left\{\left|\delta a_{i,\vec{l}}\right|\ge\frac{f_{\rm min}}{2(L+1)^d}\right\} \nonumber \\
    \le & {\rm Pr}\left\{\left|\delta a_{i,\vec{l}}-\mathbb{E}_{\rm Q}\left[\delta a_{i,\vec{l}}\right]\right| \ge \frac{f_{\rm min}}{2(L+1)^d}-\left|\mathbb{E}_{\rm Q}\left[\delta a_{i,\vec{l}}\right]\right|\right\} \nonumber \\
    \le & {\rm Pr}\left\{\left|\delta a_{i,\vec{l}}-\mathbb{E}_{\rm Q}\left[\delta a_{i,\vec{l}}\right]\right| \ge \frac{(L+1)^{d/2}\epsilon \sqrt{N}}{2\sqrt{\alpha}\left(\log(2L+1)+\frac{1}{2}\right)^{d/2}}\right\} \nonumber \\
    \le & \frac{\alpha}{N(L+1)^d},
    \label{eq:deltaaTail}
\end{align}
where we use Eqs. \eqref{eq:fmin} and \eqref{eq:coefEV} at the second inequality, and the third inequality follows from Chebyshev's inequality and Eq. \eqref{eq:coefVar}.
Note that if $\left|\delta a_{i,\vec{l}}\right|<\frac{f_{\rm min}}{2(L+1)^d}$ holds for any $i\in[N]$ and $\vec{l}\in\Lambda_L$, we have
\begin{align}
    & \hat{f}_i(\vec{x}) \nonumber \\
    = & f_i(\vec{x}) + \left(\hat{f}_i(\vec{x}) - \mathcal{P}_L[f_i]\right) + \left(\mathcal{P}_L[f_i](\vec{x})-f_i(\vec{x})\right) \nonumber \\
    \ge& f_{\rm min} - \sum_{\vec{l}\in\Lambda_L^\prime} \left|\delta a_{i,\vec{l}}\right|\cdot\left|P_{\vec{l}}(\vec{x})\right| - \frac{\epsilon}{2^{(d+2)/2}} \nonumber \\   
    \ge & 0
\end{align}
for any $i\in[N]$ and $\vec{x}\in\Omega_d$, where we use Eq. \eqref{eq:MISE2ndTerm} at the first inequality and we use $\left|\delta a_{i,\vec{l}}\right|<\frac{f_{\rm min}}{2(L+1)^d}$, $\left|P_{\vec{l}}(\vec{x})\right|\le1$, and Eq. \eqref{eq:fmin} at the second inequality.
Thus, we obtain
\begin{align}
    & {\rm Pr}\left\{{\rm All \ of \ } \hat{f}_1,\ldots,\hat{f}_N \ {\rm are \ positive \ definite}\right\} \nonumber \\
    \ge & {\rm Pr}\left\{\left|\delta a_{i,\vec{l}}\right|<\frac{f_{\rm min}}{2(L+1)^d} \ {\rm for \ all \ } i\in[N],\vec{l}\in\Lambda_L\right\}  \nonumber \\
    \ge & 1 - \sum_{i\in[N]}\sum_{\vec{l}\in\Lambda_L^\prime} {\rm Pr}\left\{\left|\delta a_{i,\vec{l}}\right|\ge\frac{f_{\rm min}}{2(L+1)^d}\right\} \nonumber \\
    \ge & 1 - N\times(L+1)^d\times\frac{\alpha}{N(L+1)^d} \nonumber \\
    \ge & 1-\alpha,
\end{align}
where we use Eq. \eqref{eq:deltaaTail} at the third inequality.

\ \\

\noindent \textbf{\underline{Complexity}}

Lastly, let us evaluate the circuit depth and the total query complexity of Algorithm \ref{alg:main}, in terms of the number of queries to the oracles $U_{p_i}$, $U^{\rm SP}_f$, and $U^{\rm amp}_{P_{\vec{l}}}$.
Since each $\proc{UBQAE}$ in Algorithm \ref{alg:main} is run as a separate quantum algorithm, the number of oracle calls in it is an upper bound of the depth of quantum circuits used in Algorithm \ref{alg:main}.
Because we make $O(1)$ uses of $U_{p_i}$, $U^{\rm SP}_f$, and $U^{\rm amp}_{P_{\vec{l}}}$ in $O_{1,\vec{l}}$ and $O_{\hat{f}_{i},i+1,\vec{l}}$, we see from Theorem \ref{th:UBQAE} that each $\proc{UBQAE}$ calls these oracles $O\left(\frac{1}{\epsilon^\prime}\log\log\left(\frac{1}{\epsilon^\prime}\right)\log\left(\frac{1}{\delta^\prime\epsilon^\prime}\right)\right)$ times.
Plugging Eqs. \eqref{eq:epsPr} and \eqref{eq:delPr} into this followed by some simplification yields the circuit depth bound in Eq. \eqref{eq:circuitDepth}.
Because Algorithm \ref{alg:main} runs $\proc{UBQAE}$ $N(L+1)^d$ times in total, multiplying this to Eq. \eqref{eq:circuitDepth} yields the total query number bound in Eq. \eqref{eq:queryComp}.
\end{proof}

\subsection{On the assumptions in Theorem \ref{th:main}}

Theorem \ref{th:main} accompanies many assumptions and for some of them, it is not apparent why they are made and whether they are reasonable.
We now explain such points.

In [AS1], we assume that the density functions $f_i$ are lower bounded by some positive number as Eq. \eqref{eq:fmin}.
We make this assumption in order to ensure that the estimated density functions $\hat{f}_i$ are bona fide \cite{Efromovich2010}.
Because of the errors due to Legendre expansion and estimating the coefficients by QAE, the value of $\hat{f}_i$ at each point in $\Omega_d$ deviates from the true value and can be even negative.
However, if $f_i$ has a positive lower bound and the error in $\hat{f}_i$ is suppressed compared to the bound, it is ensured that $\hat{f}_i$ is positive definite.
Note that $\int_{\Omega_d} \hat{f}_i(\vec{x}) d\vec{x}$ is satisfied because of setting $\hat{a}_{i,\vec{0}}=2^{-d}$.

Assuming that each $f_i$ has the positive lower bound is reasonable in the current setting that $\vec{X}(t)$ is in $\Omega_d$, a bounded region.
There are some concrete examples of stochastic processes that have bounded ranges and are of practical interest such as the two-sided reflected Brownian motion considered in Sec.~\ref{sec:demo}.
This makes the proposed method with the assumption of the bounded range meaningful. 
Nevertheless, many stochastic processes such as Brownian motions with no reflective boundary have unbounded ranges, and their density functions can be arbitrarily close to 0.
Extending the proposed method to such a case with estimated density functions kept bona fide is left for future work.

[AS2], [AS3], and [AS4] are on the availability of the state preparation oracles for the transition probability densities $p_i$ and Legendre series.
As explained in Sec. \ref{sec:QMCI}, there are some methods for state preparation with exponentially fine grid points, and thus assuming that integrals can be approximated by finite sums on the grid points as Eqs. \eqref{eq:p0FiniteApp} and \eqref{eq:piFiniteApp} is reasonable.

The oracles in [AS5] can be also implemented with arithmetic circuits and controlled rotation gates, as mentioned in Sec. \ref{sec:QMCI}.

[AS6] looks less trivial than the others.
To consider this, we expand $p_i(\vec{x}_{i+1}|\vec{x}_i)$ as a function of $\vec{x}_i$ and $\vec{x}_{i+1}$ with tensorized Legendre polynomials:
\begin{equation}
    p_i(\vec{x}_{i+1}|\vec{x}_i) = \sum_{\vec{l},\vec{l}^\prime\in \mathbb{N}_0^d} a_{p_i,\vec{l},\vec{l}^\prime} P_{\vec{l}}(\vec{x}_{i+1}) P_{\vec{l}^\prime}(\vec{x}_i). 
\end{equation}
Then, we see that Eq. \eqref{eq:csum} holds if
\begin{equation}
    \sum_{\vec{l}^\prime\in\Lambda_L^\prime}\frac{\left|a_{p_i,\vec{l},\vec{l}^\prime}\right|}{C(\vec{l}^\prime)}\le1.
    \label{eq:csum2}
\end{equation}
Even if $\left|a_{p_i,\vec{l},\vec{l}^\prime}\right|$ decreases with respect to $\vec{l}^\prime$ weakly, say $\left|a_{p_i,\vec{l},\vec{l}^\prime}\right| = O\left(\left(\|\vec{l}^\prime\|_\infty\right)^{-\omega}\right)$ with $\omega\in\mathbb{R}_+$, the LHS of Eq. \eqref{eq:csum2} converges with small $\vec{l}^\prime$ making the dominant contribution.
This consideration implies that [AS6] holds in a wide range of situations.

\bibliographystyle{apsrev4-1}
\bibliography{reference}

\end{document}